\newtheorem{definition}{Definition}
\newtheorem{lemma}{Lemma}
\newtheorem{theorem}{Theorem}
\title{Action Centered Contextual Bandits}
\author{
  Kristjan~Greenewald\\ %\thanks{Use footnote for providing further
    %information about author (webpage, alternative
    %address)---\emph{not} for acknowledging funding agencies.} \\
  Department of Statistics\\
  Harvard University\\
  %Ann Arbor, MI 48103 \\
  \texttt{kgreenewald@fas.harvard.edu} \\
  %% examples of more authors
  \And
  Ambuj~Tewari\\
  Department of Statistics\\
  University of Michigan\\
  %Ann Arbor, MI 48103 \\
  \texttt{tewaria@umich.edu} \\
   \And
  Predrag~Klasnja \\
  School of Information\\
  University of Michigan\\
  %Ann Arbor, MI 48103 \\
  \texttt{klasnja@umich.edu} \\
  \AND
  Susan~Murphy \\
  Departments of Statistics and Computer Science\\
  Harvard University\\
  %Ann Arbor, MI 48103 \\
  \texttt{samurphy@fas.harvard.edu} \\
  %% \And
  %% Coauthor \\
  %% Affiliation \\
  %% Address \\
  %% \texttt{email} \\
  %% \And
  %% Coauthor \\
  %% Affiliation \\
  %% Address \\
  %% \texttt{email} \\
}
\begin{document}
% \nipsfinalcopy is no longer used

\maketitle

%\begin{abstract}
%  Action centered thompson sampling contextual bandits.
%  
%  Contextual bandits have recently emerged as a powerful tool for efficient learning of adaptive decision-making policies (wording?). Existing algorithms, however, have theoretical guarantees only when the time-fixed model fits the data exactly, and provide bounds based on the complexity of the model predicting the reward. In this work, we introduce an action-centering orthogonalization that allows us to decouple the models for the baseline (no action) reward and for the action interaction (?), allowing us to propose an algorithm with performance guaranteed to scale only with the complexity of the interaction model. This allows us to more efficiently learn in scenarios where the true baseline reward model is highly complex and/or time-varying, while the interaction term is relatively simple and fixed over time. 
%\end{abstract}
\begin{abstract}  
Contextual bandits have become popular as they offer a middle ground between very simple approaches based on multi-armed bandits and very complex approaches
using the full power of reinforcement learning. They have demonstrated success in web applications and have a rich body of associated theoretical guarantees.
Linear models are well understood theoretically and preferred by practitioners because they are not only easily interpretable but also simple to implement and debug.
Furthermore, if the linear model is true,
we get very strong performance guarantees. Unfortunately, in emerging applications in mobile health, the time-invariant linear model assumption is untenable.
We provide an extension of the linear model for contextual bandits that has two parts: baseline reward and treatment effect. We allow the former to be complex but keep the latter simple.
We argue that this model is plausible for mobile health applications. At the same time, it leads to algorithms with strong performance guarantees as in the linear model setting, while still allowing for complex nonlinear baseline modeling.
Our theory is supported by experiments on data gathered in a recently concluded mobile health study.
\end{abstract}
\section{Introduction}
%what we are seeing when we try to implement

%why bandit is sub-par, why not enough data for reinforcement learning

%clipping

%action centering
In the theory of sequential decision-making, contextual bandit problems \citep{tewari2017ads} occupy a middle ground between multi-armed bandit problems \citep{bubeck2012regret} and full-blown reinforcement learning (usually modeled using Markov decision processes along with discounted or average reward optimality criteria \citep{sutton1998reinforcement,puterman2005markov}). Unlike bandit algorithms, which cannot use any side-information or context, contextual bandit algorithms can learn to map the context into appropriate actions. However, contextual bandits do not consider the impact of actions on the evolution of future contexts. Nevertheless, in many practical domains where the impact of the learner's action on future contexts is limited, contextual bandit algorithms have shown great promise. Examples include web advertising \citep{abe1999learning} and news article selection on web portals \citep{li2010contextual}.

An influential thread within the contextual bandit literature models the expected reward for any action in a given context using a linear mapping from a $d$-dimensional context vector to a real-valued \emph{reward}.
Algorithms using this assumption include LinUCB and Thompson Sampling, for both of which regret bounds have been derived. These analyses often allow the context sequence to be chosen adversarially, but require the linear model, which links rewards to contexts, to be time-invariant. There has been little effort to extend these algorithms and analyses when the data follow an unknown nonlinear or time-varying model. 

In this paper, we consider a particular type of non-stationarity and non-linearity that is motivated by problems arising in mobile health (mHealth). Mobile health is a fast developing field that uses mobile and wearable devices for health care delivery. These devices provide us with a real-time stream of dynamically evolving contextual information about the user (location, calendar, weather, physical activity, internet activity, etc.). Contextual bandit algorithms can learn to map this contextual information to a set of available intervention options (e.g., whether or not to send a medication reminder). However, human behavior is hard to model using stationary, linear models. We make a fundamental assumption in this paper that is quite plausible in the mHealth setting. In these settings, there is almost always a ``do nothing'' action usually called action $0$. The expected reward for this action is the \emph{baseline reward} and it can change in a very non-stationary, non-linear fashion. However, the \emph{treatment effect} of a non-zero action, i.e., the incremental change over the baseline reward due to the action, can often be plausibly modeled using standard stationary, linear models.

We show, both theoretically and empirically, that the performance of an appropriately designed action-centered contextual bandit algorithm is agnostic to the high model complexity of the baseline reward. Instead, we get the same level of performance as expected in a stationary, linear model setting. Note that it might be tempting to make the entire model non-linear and non-stationary. However, the sample complexity of learning very general non-stationary, non-linear models is likely to be so high that they will not be useful in mHealth where data is often noisy, missing, or collected only over a few hundred decision points.

We connect our algorithm design and theoretical analysis to the real world of mHealth by using data from a pilot study of HeartSteps, an Android-based walking intervention. HeartSteps encourages walking by sending individuals contextually-tailored suggestions to be active. Such suggestions can be sent up to five times a day--in the morning, at lunchtime, mid-afternoon, at the end of the workday, and in the evening--and each suggestion is tailored to the user's current context: location, time of day, day of the week, and weather. HeartSteps contains two types of suggestions: suggestions to go for a walk, and suggestions to simply move around in order to disrupt prolonged sitting. 
While the initial pilot study of HeartSteps micro-randomized the delivery of activity suggestions \citep{klasnja2015microrandomized,liao2015sample}, delivery of activity suggestions is an excellent candidate for the use of contextual bandits, as the effect of delivering (vs. not) a suggestion at any given time is likely to be strongly influenced by the user's current context, including location, time of day, and weather. 

This paper's main contributions can be summarized as follows.
%\begin{itemize}
%\item
We introduce a variant of the standard linear contextual bandit model that allows the baseline reward model to be quite complex while keeping the treatment effect model simple.
%\item
We then introduce the idea of using action centering in contextual bandits as a way to decouple the estimation of the above two parts of the model.
%We make appropriate modifications of the definition of regret and provide novel regret analysis to deal with our hybrid linear-nonlinear/time-varying model. 
We show that action centering is effective in dealing with time-varying and non-linear behavior in our model, leading to regret bounds that scale as nicely as previous bounds for linear contextual bandits.
Finally, we use data gathered in the recently conducted HeartSteps study to validate our model and theory. %We hope that this encourages other researchers to apply contextual bandit algorithms in human-facing applications of sequential decision-making including personalized health and personalized education.
%\end{itemize}

\subsection{Related Work}
%Thompson sampling, other extensions, lin-UCB, etc. Why we pick Thompson. Nonlinear contextual bandits. 

Contextual bandits have been the focus of considerable interest in recent years. \cite{chu2011contextual} and \cite{agrawal2013thompson} have examined UCB and Thompson sampling methods respectively for linear contextual bandits. Works such as \cite{seldin2011pac}, \cite{dudik2011efficient} considered contextual bandits with fixed policy classes. Methods for reducing the regret under complex reward functions include the nonparametric approach of \cite{may2012optimistic}, the ``contextual zooming" approach of \cite{slivkins2014contextual}, the kernel-based method of \cite{valko2013finite}, and the sparse method of \cite{bastani2015online}. 
Each of these approaches has regret that scales with the complexity of the overall reward model including the baseline, and requires the reward function to remain constant over time. 

\section{Model and Problem Setting}
%CONNECT TO MOBILE HEALTH

%SOMEHOW HAVE IT COME OUT THAT HAVING MORE INFORMATION ABOUT BASELINE REWARD IMPROVES CONVERGENCE. 

%TRY TO GET ACCURACY OF BASELINE REWARD MODEL BE AN EXPLICIT CONSTANT OUT FRONT. DEMONSTRATE IN SIMULATION. 

Consider a contextual bandit with a baseline (zero) action and $N$ non-baseline arms (actions or treatments). At each time $t = 1,2, \dots$, a context vector $\bar{s}_t \in \mathbb{R}^{d'}$ is observed, an action $a_t \in \{0, \dots, N\}$ is chosen, and a reward $r_t(a_t)$ is observed. The bandit learns a mapping from a state vector $s_{t,a_t}$ depending on $\bar{s}_t$ and $a_t$ to the expected reward $r_t(s_{t,a_t})$. 
The state vector $s_{t,a_t} \in \mathbb{R}^d$ is a function of $a_t$ and $\bar{s}_t$. This form is used to achieve maximum generality, as it allows for infinite possible actions so long as the reward can be modeled using a $d$-dimensional $s_{t,a}$. In the most unstructured case with $N$ actions, we can simply encode the reward with a $d = Nd'$ dimensional $s_{t,a_t}^T = [I(a_t = 1)\bar{s}_t^T, \dots, I(a_t = N) \bar{s}_t^T]$ where $I(\cdot)$ is the indicator function. 

For maximum generality, we assume the context vectors are chosen by an adversary on the basis of the history $\mathcal{H}_{t-1}$ of arms $a_\tau$ played, states $\bar{s}_\tau$, and rewards $ r_{\tau} (\bar{s}_\tau, a_\tau)$ received up to time $t-1$, i.e.,
\[
\mathcal{H}_{t-1} = \{a_\tau , \bar{s}_t, r_{\tau} (\bar{s}_\tau, a_\tau), i = 1,\dots, N, \tau = 1, \dots, t-1\}.
\]

%We assume that the expected reward for arm $i$ at any given time $t$ is linear in the context $b_i(t)$, and that the coefficients $\mu(t) \in \mathbb{R}^d$ of the reward are time-varying. Specifically,
%\[
%\mathbb{E}[r_i(t) | \{b_i(t)\}_{i = 1}^N, \mathcal{H}_{t-1}] = \mathbb{E}[r_i(t) | b_i(t)] = b_i(t)^T \mu (t).
%\]

Consider the model %Suppose a contextual bandit with observed states $\bar{s}_t$ and actions $a$, where
$%\[
E[r_t(\bar{s}_t,a_t) | \bar{s}_t,a_t] = \bar{f}_t(\bar{s}_t,a_t), 
$ %\]
where $\bar{f}_t$ can be decomposed into a fixed component dependent on action and a time-varying component that does not depend on action:
\[
E[r_t(\bar{s}_t,a_t) | \bar{s}_t,a_t] =\bar{f}_t(\bar{s}_t,a_t) = f(s_{t,a_t}) I(a_t > 0) + g_t(\bar{s}_t),
\]
where $\bar{f}_t(\bar{s}_t,0) = g_t(\bar{s}_t)$ due to the indicator function $I(a_t > 0)$. Note that the optimal action depends in no way on $g_t$, which merely confounds the observation of regret. We hypothesize that the regret bounds for such a contextual bandit asymptotically depend only on the complexity of $f$, not of $g_t$. We emphasize that we do not require any assumptions about or bounds on the complexity or smoothness of $g_t$, allowing $g_t$ to be arbitrarily nonlinear and to change abruptly in time. These conditions create a partially agnostic setting where we have a simple model for the interaction but the baseline cannot be modeled with a simple linear function. In what follows, for simplicity of notation we drop $\bar{s}_t$ from the argument for $r_t$, writing $r_t(a_t)$ with the dependence on $\bar{s}_t$ understood.

In this paper, we consider the linear model for the reward difference at time $t$:
\begin{equation}\label{eq:reward}
r_t(a_t) - r_t(0) = f(s_{t,a_t}) I(a_t > 0) + n_t = s_{t,a_t}^T \theta I(a_t > 0) + n_t 
\end{equation}
where $n_t$ is zero-mean sub-Gaussian noise with variance $\sigma^2$ and $\theta \in \mathbb{R}^d$ is a vector of coefficients. The goal of the contextual bandit is to estimate $\theta$ at every time $t$ and use the estimate to decide which actions to take under a series of observed contexts. As is common in the literature, we assume that both the baseline and interaction rewards are bounded by a constant for all $t$.
%\subsection{Probability Constraints in Mobile Health}

The task of the action-centered contextual bandit is to choose the probabilities $\pi(a,t)$ of playing each arm $a_t$ at time $t$ so as to maximize expected differential reward 
\begin{align}\label{eq:rewardPol}
\mathbb{E}[r_t(a_t)-r_t(0)|\mathcal{H}_{t-1}, s_{t,a}] &= \sum\nolimits_{a = 0}^N \pi(a,t) \mathbb{E}[r_t(a) - r_t(0)|\mathcal{H}_{t-1}, s_{t,a}] \\\nonumber
&= \sum\nolimits_{a = 0}^N \pi(a,t) s_{t,a}^T \theta I(a> 0).
\end{align}
This task is closely related to obtaining a good estimate of the reward function coefficients $\theta$. %Since the contextual bandit operates in real time, it can base its decision only on the history $\mathcal{H}_{t-1}$ and the current contexts $\bar{s}_t$. 

\subsection{Probability-constrained optimal policy}
%WHY PROB CONSTRAINTS.WHY ONLY ON ZERO

In the mHealth setting, a contextual bandit must choose at each time point whether to deliver to the user a behavior-change intervention, and if so, what type of intervention to deliver. Whether or not an intervention, such as an activity suggestion or a medication reminder, is sent is a critical aspect of the user experience. If a bandit sends too few interventions to a user, it risks the user's disengaging with the system, and if it sends too many, it risks the user's becoming overwhelmed or desensitized to the system's prompts. Furthermore, standard contextual bandits will eventually converge to a policy that maps most states to a near-100\% chance of sending or not sending an intervention. Such regularity could not only worsen the user's experience, but ignores the fact that users have changing routines and cannot be perfectly modeled. We are thus motivated to introduce a constraint on the size of the probabilities of delivering an intervention. We constrain $0< \pi_{\min} \leq 1-\mathbb{P}(a_t = 0|\bar{s}_t) \leq \pi_{\max} < 1$, where $\mathbb{P}(a_t = 0|\bar{s}_t)$ is the conditional bandit-chosen probability of delivering an intervention at time $t$. The constants $\pi_{\min}$ and $\pi_{\max}$ are not learned by the algorithm, but chosen using domain science, and might vary for different components of the same mHealth system. We constrain $\mathbb{P}(a_t = 0|\bar{s}_t)$, not each $\mathbb{P}(a_t = i|\bar{s}_t)$, as which intervention is delivered is less critical to the user experience than being prompted with an intervention in the first place. User habituation can be mitigated by implementing the nonzero actions ($a  = 1,\dots,N$) to correspond to several \emph{types} or \emph{categories} of messages, with the exact message sent being randomized from a set of differently worded messages. 

Conceptually, we can view the bandit as pulling two arms at each time $t$: the probability of sending a message (constrained to lie in $[\pi_{\min},\pi_{\max}]$) and which message to send if one is sent. While these probability constraints are motivated by domain science, these constraints also enable our proposed action-centering algorithm to effectively orthogonalize the baseline and interaction term rewards, achieving sublinear regret in complex scenarios that often occur in mobile health and other applications and for which existing approaches have large regret.

%say pimin pimax come from pedja, not algorithm design. 

Under this probability constraint, we can now derive the optimal policy with which to compare the bandit. The policy that maximizes the expected reward \eqref{eq:rewardPol} will play the optimal action 
\[
a^*_t = \arg \max_{i \in \{0,\dots, N\}} s_{t,i}^T \theta I(i > 0),
\]
with the highest allowed probability. The remainder of the probability is assigned as follows. If the optimal action is nonzero, the optimal policy will then play the zero action with the remaining probability (which is the minimum allowed probability of playing the zero action). If the optimal action is zero, the optimal policy will play the nonzero action with the highest expected reward 
\[
\bar{a}^{*}_t = \arg \max_{i \in \{1,\dots, N\}} s_{t,i}^T \theta
\]
with the remaining probability, i.e. $\pi_{\min}$.
%The arm $a^*_t$ giving the largest reward is given by
%The nonzero arm $a^{**}_t$ giving the largest reward is given by
To summarize, under the constraint $1-\pi^*_t(0,t) \in [\pi_{\min},\pi_{\max}]$, the expected reward maximizing policy plays arm $a_t$ with probability $\pi^*(a,t)$, where
\begin{align}\label{eq:policy}
%\pi^*(a^*_t,t) = \mathbb{P}(a_t &= a^*_t) = 1-\epsilon \quad \mathrm{and}\\\nonumber
\mathrm{If} \: &a^*_t \neq 0: \pi^*(a^*_t,t) = \pi_{\max},\quad \pi^*(0,t) = 1-\pi_{\max}, \quad \pi^*(a,t) = 0 \:\forall a \neq 0, a^*_t\\\nonumber
\mathrm{If} \: &a^*_t = 0: \pi^*(0,t) = 1-\pi_{\min},\quad \pi^*(\bar{a}^{*}_t,t) = \pi_{\min}, \quad \pi^*(a,t)  = 0 \: \forall a \neq 0, \bar{a}^*_t.
\end{align}

%\section{Problem Setting}
\section{Action-centered contextual bandit}
Since the observed reward always contains the sum of the baseline reward and the differential reward we are estimating, and the baseline reward is arbitrarily complex, the main challenge is to isolate the differential reward at each time step. We do this via an action-centering trick, which randomizes the action at each time step, allowing us to construct an estimator whose expectation is proportional to the differential reward $r_t(\bar{a}_t) - r_t(0)$, where $\bar{a}_t$ is the nonzero action chosen by the bandit at time $t$ to be randomized against the zero action. For simplicity of notation, we set the probability of the bandit taking nonzero action $\mathbb{P}(a_t > 0)$ to be equal to $1- \pi(0,t) = \pi_t$.

%\subsection{Contextual bandits as regression}

%\[
%\theta = B^{-1}\hat{b}.
%\]

\subsection{Centering the actions - an unbiased $r_t(\bar{a}_t) - r_t(0)$ estimate}
To determine a policy, the bandit must learn the coefficients $\theta$ of the model for the differential reward $r_t(\bar{a}_t) - r_t(0) = s_{t,\bar{a}_t}^T\theta$ as a function of $\bar{a}_t$. If the bandit had access at each time $t$ to the differential reward $r_t(\bar{a}_t) - r_t(0)$, we could estimate $\theta$ using a penalized least-squares approach by minimizing 
\begin{align*}%\label{eq:obj}
\arg \min_{\theta} \sum\nolimits_{t=1}^T (r_t (\bar{a}_t) - r_t(0) &- \theta^T s_{t,\bar{a}_t} )^2 + \lambda\|\theta\|_2^2%= \sum_{t=1}^T (r_t(\bar{a}_t) - r_t(0))^2 + 2(r_t(\bar{a}_t) - r_t(0))\theta^T s_{t,\bar{a}_t} + (\theta^T s_{t,\bar{a}_t})^2%\\\nonumber
%&= c + 2\theta^T b + \theta^T B \theta
\end{align*}
over $\theta$, where $r_t(a)$ is the reward under action $a$ at time $t$ \citep{agrawal2013thompson}. This corresponds to the Bayesian estimator when the reward is Gaussian. % and
%\begin{align*}
%b &= \sum_{t = 1}^T   s_{t,\bar{a}_t}(r_t(\bar{a}_t) - r_t(0)),\quad
%B = \sum_{t=1}^T    s_{t,\bar{a}_t} s_{t,\bar{a}_t}^T.
%\end{align*}
%The solution to \eqref{eq:obj} is 
%The coefficient estimate is then
% $
% \hat{\theta} = B^{-1} {b}.
%$
 %Obtaining both $r_t(1)$ and $r_t(0)$ is of course impossible. 
%In the objective, we have %(Alg. 1)
%Note that $r_t(\bar{a}_t) - r_t(0)$ only affects the vector $b$, not $B$.
Although we have only access to $r_t(a_t)$, not $r_t(\bar{a}_t) - r_t(0)$, 
observe that given $\bar{a}_t$, the bandit randomizes to $a_t = \bar{a}_t$ with probability $\pi_t$ and $a_t = 0$ otherwise. Thus
\begin{align}\label{eq:ExpR}
\mathbb{E}[(I(a_t>0) - \pi_t) r_t(a_t) | \mathcal{H}_{t-1},\bar{a}_t,\bar{s}_t] &= \pi_t(1-\pi_t) r_t(\bar{a}) - (1- \pi_t) \pi_t r_t(0)\\\nonumber
& = \pi_t(1-\pi_t) (r_t(\bar{a}_t) -  r_t(0)).
\end{align}
Thus $(I(a_t>0) - \pi_t) r_t(a_t) $, which only uses the observed $r_t(a_t)$, is proportional to an unbiased estimator of $r_t(\bar{a}_t) - r_t(0)$. Recalling that $\bar{a}_t, a_t$ are both known since they are chosen by the bandit at time $t$, we create the estimate of the differential reward between $\bar{a}_t$ and action 0 at time $t$ as
\[
\hat{r}_t(\bar{a}_t) = (I(a_t>0) - \pi_t) r_t(a_t).
\]
The corresponding penalized weighted least-squares estimator for $\theta$ using $\hat{r}_t (\bar{a}_t)$ is the minimizer of
\begin{align}\label{eq:obj2}
\sum\nolimits_{t=1}^T&  \pi_t (1-\pi_t)(\hat{r}_t(\bar{a}_t)/(\pi_t (1-\pi_t)) -  \theta^T s_{t,\bar{a}_t} )^2+\|\theta\|_2^2\\\nonumber &= \sum\nolimits_{t=1}^T \frac{(\hat{r}_t(\bar{a}_t))^2}{\pi_t(1-\pi_t)} - 2\hat{r}_t(\bar{a}_t)\theta^T s_{t,\bar{a}_t} + \pi_t (1-\pi_t)(\theta^T s_{t,\bar{a}_t})^2+\|\theta\|_2^2\\\nonumber&= c - 2\theta^T \hat{b} + \theta^T B \theta+\|\theta\|_2^2,
\end{align}
where for simplicity of presentation we have used unit penalization $\|\theta\|_2^2$, and 
\begin{align*}
\hat{b} &=  \sum\nolimits_{t = 1}^T (I(a_t>0) - \pi_t)s_{t,\bar{a}_t} r_t(a_t),\quad
B =  I + \sum\nolimits_{t=1}^T \pi_t (1-\pi_t) s_{t,\bar{a}_t}s_{t,\bar{a}_t}^T.
\end{align*}
The weighted least-squares weights are $\pi_t (1-\pi_t)$, since $\mathrm{var}\left[\left.\frac{\hat{r}_t(\bar{a}_t)}{\pi_t (1-\pi_t)}\right| \mathcal{H}_{t-1}, \bar{a}_t, \bar{s}_t\right] = \frac{\mathrm{var}[\hat{r}_t(\bar{a}_t)t| \mathcal{H}_{t-1}, \bar{a}_t, \bar{s}_t]}{(\pi_t(1-\pi_t))^2}$ and the standard deviation of $\hat{r}_t(\bar{a}_t) = (I(a_t>0) - \pi_t) r_t(a_t)$ given $ \mathcal{H}_{t-1}, \bar{a}_t, \bar{s}_t$ is of order $g_t(\bar{s}_t) = O(1)$.
 %For proper normalization in the least-squares setting, we set %should replace $B = \sum_t s_t s_t^T$ with
 %\begin{align*}
 %\end{align*}
% Is this the posterior variance?
 The minimizer of \eqref{eq:obj2} is
 $%\[
 \hat{\theta} = B^{-1} \hat{b}$. %Note that the coefficients $\pi_t (1-\pi_t)$ arise from the coefficients of the weighted regression.
 %\]
 
 %\subsection{Time-varying confusers}
 
 %\subsection{Epoch-greedy: Low regret}

 \subsection{Action-Centered Thompson Sampling}
 As the Thompson sampling approach generates probabilities of taking an action, rather than selecting an action, Thompson sampling is particularly suited to our regression approach. We follow the basic framework of the contextual Thompson sampling approach presented by \cite{agrawal2013thompson}, extending and modifying it to incorporate our action-centered estimator and probability constraints. 
 
The critical step in Thompson sampling is randomizing the model coefficients according to the prior $\mathcal{N}(\hat{\theta},v^2 B^{-1})$ for $\theta$ at time $t$. A $\theta' \sim\mathcal{N}(\hat{\theta},v^2 B^{-1})$ is generated, and the action $a_t$ chosen to maximize $s_{t,a}^T \theta'$.
%\[
%a_t = \arg \max_{a \in \{0,\dots,N\}} s_{t,a}^T \theta' I(a > 0).
%\]
% , as $v^2B^{-1}$ is the prior covariance of the current estimate $\hat{\theta}$ \cite{?}. 
The probability that this procedure selects any action $a$ is determined by the distribution of $\theta'$; however, it may select action 0 with a probability not in the required range $[1-\pi_{\max}, 1-\pi_{\min}]$. %We thus introduce an alternative procedure that incorporates the desired probability constraints.
We thus introduce a two-step hierarchical procedure. After generating the random $\theta'$, we instead choose the \emph{nonzero} $\bar{a}_t$ maximizing the expected reward
%$\theta' \sim\mathcal{N}(\hat{\theta},v^2 B^{-1})$ and choose the nonzero action $\bar{a}_t$ maximizing the corresponding expected reward
\[
\bar{a}_t = \arg \max_{a \in \{1,\dots,N\}} s_{t,a}^T \theta'.
\]
 \begin{algorithm}[h]
\caption{Action-Centered Thompson Sampling}\label{Alg:Thom}
\begin{algorithmic}[1]
%\STATE Given prior diagonal covariance $\Sigma_0$. 
\STATE Set $B =I$, $\hat{\theta} = 0$, $\hat{b} = 0$, choose $[\pi_{\min},\pi_{\max}]$.
\FOR{$t = 1,2,\dots$}
\STATE Observe current context $\bar{s}_t$ and form $s_{t,a}$ for each $a \in\{ 1, \dots, N\}$.
\STATE Randomly generate $\theta' \sim\mathcal{N}(\hat{\theta},v^2 B^{-1})$.
\STATE Let
\[
\bar{a}_t = \arg \max_{a \in \{1,\dots,N\}} s_{t,a}^T \theta'.
\]
%\STATE Obtain number $N_t$ of actions taken in past 24 hours. 
%\STATE Feedback mean offset $\delta_\mu = \alpha(N_t - \beta)_+$.
\STATE Compute probability $\pi_t$ of taking a nonzero action according to \eqref{eq:nzprob}.%\STATE %Sample $\tilde{\theta}(t)\sim \mathcal{N}(\hat{\theta}, v^2B^{-1})$.
%\STATE $\pi_t \gets \min(\pi_{\max}, \max(\pi_t,\pi_{\min}))$.
%\STATE Choose to do a nonzero action ($a_t > 0$) with probability $\pi_t$.\
\STATE Play action $a_t = \bar{a}_t$ with probability $\pi_t$, else play $a_t = 0$. 
%\IF{With probability $\pi_t$}
%\STATE Play action $a_t = \bar{a}_t$.
%\ELSE
%\STATE Play action $a_t = 0$.
%\ENDIF
 \STATE Observe reward $r_t(a_t)$ and update $\hat{\theta}$
%\STATE Update 
\begin{align*}
B  &= B + \pi_t (1-\pi_t)s_{t,\bar{a}_t} s_{t,\bar{a}_t}^T,\quad
\hat b = \hat{b}+ s_{t,\bar{a}_t} (I(a_t > 0)- \pi_t)r_t(a_t),\quad
\hat{\theta} = B^{-1}\hat{b}.
\end{align*}
\ENDFOR
\end{algorithmic}
\end{algorithm}
Then we randomly determine whether to take the nonzero action, choosing $a_t =\bar{a}_t$ with probability
\begin{equation}
\label{eq:nzprob}
\pi_t = \mathbb{P}(a_t > 0) = \max(\pi_{\min},\min(\pi_{\max},\mathbb{P}(s_{t,\bar{a}}^T \tilde{\theta}> 0))),
\end{equation}
and $a_t = 0$ otherwise, where $\tilde{\theta} \sim\mathcal{N}(\hat{\theta},v^2 B^{-1})$. $\mathbb{P}(s_{t,\bar{a}}^T \tilde{\theta}> 0)$ is the probability that the expected relative reward $s_{t,\bar{a}}^T \tilde{\theta}$ of action $\bar{a}_t$ is higher than zero for $\tilde{\theta} \sim\mathcal{N}(\hat{\theta},v^2 B^{-1})$.
This probability is easily computed using the normal CDF. 
%Then, if the decision was made to send a nonzero action, we randomly select which nonzero action to send.
%The full procedure is
%\begin{align}
%a_t = \arg \max_{a \in \{1,\dots,N\}} s_{t,a}^T \theta'.
%\end{align}
%\[
%\pi(a,t) = \arg\max_{\sum_{a} \pi(a) = 1, \pi(0) \in [\pi_{\min},\pi_{\max}]} \sum_{a = 0}^N \pi(a) [s_{t,a}^T \theta I(a > 0)]
%\]
 %Mention existing thompson approach.
 Finally the bandit updates $\hat b$, $B$ and computes an updated $\hat{\theta} = B^{-1} \hat{b}$.
 Our action-centered Thompson sampling algorithm is summarized in Algorithm \ref{Alg:Thom}.
 
% THIS IS LIKE A HIERARCHICAL BANDIT
 
 %Regression is as in above. Need to address our implementation of the probability constraints.. 
 
 %We choose this over UCB since Thompson sampling gives an explicit $\pi_t$. Note that $v$ has to be increased in correspondence to the size of $\bar{f}$. 

\section{Regret analysis}
%\subsection{Regret and Probability Constraints}
Classically, the regret of a bandit is defined as the difference between the reward achieved by taking the optimal actions $a^*_t$, and the expected reward received by playing the arm $a_t$ chosen by the bandit
\begin{equation}\label{eq:regcla}
\mathrm{regret}_{classical}(t) = s_{t,a^*_t}^T\theta - s_{t,a_t}^T \theta,
\end{equation}
where the expectation is taken conditionally on $a_t, s_{t,a_t}^T, \mathcal{H}_{t-1}$. 
%and the classical contextual bandit attempts to minimize the total regret over possibly unknown horizon $T$.
%TALK ABOUT HOW WE ARE DOING EXPECTATION'
For simplicity, let $\pi^*_t = 1-\pi^*_t(0,t)$ be the probability that the optimal policy takes a nonzero action, and recall that $\pi_t = 1-\pi_t(0,t)$ is the probability the bandit takes a nonzero action.
The probability constraint implies that the optimal policy \eqref{eq:policy} plays the optimal arm with a probability bounded away from 0 and 1, hence definition \eqref{eq:regcla} is no longer meaningful. We can instead create a regret that is the difference in expected rewards conditioned on $\bar{a}_t,\pi_t, s_{t,a_t}^T, \mathcal{H}_{t-1}$, but not on the randomized action $a_t$:
\begin{align}\label{eq:regret}
\mathrm{regret}(t) &=  \pi^*_t s_{t,\bar{a}^*_t}^T\theta - \pi_t s_{t,\bar{a}_t}^T \theta%\sum_{a = 0}^N (\pi^*(a,t) - \pi(a,t)) s_{t,a}^T \theta I(a> 0)
\end{align}
%, with the expectation also taken with respect to the randomization between zero and nonzero actions:
%\[
%\mathrm{regret}(t) = 
%0\cdot\mathbb{P}(a^*_t =0) + \mathbb{P}(a^*_t =\bar{a}^*_t) s_{t,\bar{a}^*_t}^T\theta - 0\cdot(1-\pi_t) -\pi_t s_{t,\bar{a}_t}^T \theta,
%\]
where we have recalled that given $\bar{a}_t$, the bandit plays action $a_t = \bar{a}_t$ with probability $\pi_t$ and plays the $a_t = 0$ with differential reward 0 otherwise.
%Similarly, we define the expected regret of the action-centered contextual bandit to the difference between the expected reward of the optimal policy $\pi^*$ \eqref{eq:policy} and the expected reward for the bandit given $\bar{a}_t, s_{t,a}$, and $\mathcal{H}_{t-1}$. 
%Following the definition \eqref{eq:rewardPol} of the expected reward, we have
%\begin{align}\label{eq:regret}
%\mathrm{regret}(t) &=  \pi^*_t s_{t,\bar{a}^*_t}^T\theta - \pi_t s_{t,\bar{a}_t}^T \theta.%\sum_{a = 0}^N (\pi^*(a,t) - \pi(a,t)) s_{t,a}^T \theta I(a> 0)
%\end{align}
The action-centered contextual bandit attempts to minimize the cumulative regret $
\mathcal{R}(T) = \sum\nolimits_{t = 1}^T \mathrm{regret}(t)
$ over horizon $T$.

\subsection{Regret bound for Action-Centered Thompson Sampling}
In the following theorem we show that with high probability, the probability-constrained Thompson sampler has low regret relative to the optimal probability-constrained policy. 
\begin{theorem}
Consider the action-centered contextual bandit problem, where $g_t$ is potentially time varying, and $\bar{s}_t$ at time $t$ given $\mathcal{H}_{t-1}$ is chosen by an adversary. Under this regime, the total regret at time $T$ for the action-centered Thompson sampling contextual bandit (Algorithm 1) satisfies
\[
\mathcal{R}(T) \leq C \left(\frac{d^2}{\epsilon} \sqrt{T^{1+\epsilon}}(\log(Td) \log \frac{1}{\delta})\right)
\]
with probability at least $1-3\delta/2$, for any $0< \epsilon < 1$, $0< \delta < 1$. The constant $C$ is in the proof.
\end{theorem}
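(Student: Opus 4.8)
The plan is to adapt the linear-payoff Thompson sampling analysis of Agrawal and Goyal to the action-centered estimator, the probability-constrained policy, and the arbitrary baseline $g_t$. The first move is to split the per-step regret into an arm-selection part and a probability-selection part. Adding and subtracting $\pi_t^* s_{t,\bar a_t}^T\theta$ gives
\[
\mathrm{regret}(t) = \pi_t^*\left(s_{t,\bar a_t^*}^T\theta - s_{t,\bar a_t}^T\theta\right) + (\pi_t^* - \pi_t)\, s_{t,\bar a_t}^T\theta.
\]
Since $\pi_t^*\le \pi_{\max}\le 1$ and the first bracket is the nonnegative gap between the best nonzero arm and the played nonzero arm, the first term is dominated by the classical linear-bandit gap $s_{t,\bar a_t^*}^T\theta - s_{t,\bar a_t}^T\theta$, which is exactly the quantity controlled in the Agrawal--Goyal analysis. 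The second term is the genuinely new contribution from clipping the sending probability into $[\pi_{\min},\pi_{\max}]$, and bounding it is where most of the extra work lies.

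Before either term can be controlled, I would establish concentration of $\hat\theta=B^{-1}\hat b$ around $\theta$, and this is where action centering is essential. Writing $\hat b - (B-I)\theta = \sum_{\tau\le t} m_\tau$ with
\[
m_\tau = (I(a_\tau>0)-\pi_\tau)\,s_{\tau,\bar a_\tau}\, r_\tau(a_\tau) - \pi_\tau(1-\pi_\tau)\,s_{\tau,\bar a_\tau}s_{\tau,\bar a_\tau}^T\theta,
\]
the decisive observation is that $\mathbb{E}[(I(a_\tau>0)-\pi_\tau)\mid\mathcal H_{\tau-1},\bar a_\tau,\bar s_\tau]=0$, so the arbitrary baseline term $g_\tau(\bar s_\tau)$ inside $r_\tau(a_\tau)$ cancels and $\{m_\tau\}$ is a martingale-difference sequence. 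Because the baseline and interaction rewards are bounded and $n_\tau$ is sub-Gaussian, each $m_\tau$ is conditionally sub-Gaussian with an $O(1)$ parameter that absorbs $g_\tau$ but does not depend on its complexity; applying a self-normalized (Abbasi-Yadkori-type) tail bound then yields $\|\hat\theta-\theta\|_B = O(\sqrt{d\log(T/\delta)})$, equivalently $|s_{t,a}^T(\hat\theta-\theta)| \le \ell_t \sqrt{s_{t,a}^T B^{-1} s_{t,a}}$ with $\ell_t = O(\sqrt{d\log(Td/\delta)})$, with probability at least $1-\delta/2$. This is the step that makes the final rate agnostic to $g_t$.

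With concentration in hand, the arm-selection term is handled by the usual Thompson sampling template: define the events that $\hat\theta$ concentrates around $\theta$ and that the sampled $\theta'$ concentrates around $\hat\theta$, prove the anti-concentration (optimism) property that $s_{t,\bar a_t^*}^T\theta'$ exceeds $s_{t,\bar a_t^*}^T\theta$ with at least constant probability, and thereby bound each per-step gap by a constant multiple of $\sqrt{s_{t,\bar a_t}^T B^{-1} s_{t,\bar a_t}}$ up to a bounded-difference martingale term. For the probability-selection term I would argue case-wise on the sign of $s_{t,\bar a_t^*}^T\theta$: the clipped optimal $\pi_t^*$ equals $\pi_{\max}$ or $\pi_{\min}$ according to that sign, while the bandit's $\pi_t$ is the same clip of the Gaussian-CDF probability $\mathbb{P}(s_{t,\bar a}^T\tilde\theta>0)$; on the concentration event the two clips can disagree only when $|s_{t,\bar a_t}^T\theta|$ is of the order of the confidence half-width $\ell_t\, v\sqrt{s_{t,\bar a_t}^T B^{-1} s_{t,\bar a_t}}$, so $|(\pi_t^*-\pi_t)\, s_{t,\bar a_t}^T\theta|$ is again $O(\sqrt{s_{t,\bar a_t}^T B^{-1} s_{t,\bar a_t}})$ up to constants. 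Both terms therefore reduce to summing these width quantities.

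The final step is to sum over $t$. Since $\pi_t(1-\pi_t)\in[\pi_{\min}(1-\pi_{\max}),\tfrac14]$ is bounded away from zero, $B$ grows like a standard design matrix and the elliptical-potential (matrix-determinant) lemma gives $\sum_{t=1}^T \min\{1, s_{t,\bar a_t}^T B^{-1} s_{t,\bar a_t}\} = O(d\log T)$; combined with a Cauchy--Schwarz step this converts $\sum_t \sqrt{s_{t,\bar a_t}^T B^{-1} s_{t,\bar a_t}}$ into the $\sqrt{T}$ scaling, while the sampling-variance factor $v=O(\sqrt{d\log(\cdot)})$ supplies the extra powers of $d$ and the $T^{\epsilon/2}$ slack, matching the Agrawal--Goyal form $\tfrac{d^2}{\epsilon}\sqrt{T^{1+\epsilon}}\,\log(Td)\log\tfrac1\delta$. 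The bounded-difference martingale terms are controlled by Azuma--Hoeffding, and a union bound over the concentration, anti-concentration, and Azuma events accounts for the $1-3\delta/2$ probability. I expect the main obstacle to be the concentration step together with the case analysis for the probability-selection term: verifying that the arbitrary, adversarial baseline $g_t$ contributes only bounded conditional variance, and hence drops out of the rate, is the technical heart, and ensuring the probability-clipping mismatch is genuinely of width order rather than a constant is the delicate new estimate absent from the standard linear analysis.
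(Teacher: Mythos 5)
Your proposal follows essentially the same route as the paper's proof: the identical decomposition of $\mathrm{regret}(t)$ into an arm-selection gap and a probability-clipping term $(\pi_t^*-\pi_t)s_{t,\bar a_t}^T\theta$, the same action-centering martingale observation (the baseline $g_t$ cancels because $\mathbb{E}[I(a_t>0)-\pi_t\mid\mathcal{H}_{t-1},\bar a_t,\bar s_t]=0$) feeding an Abbasi-Yadkori self-normalized bound, the same Agrawal--Goyal template (concentration/anti-concentration events, supermartingale plus Azuma--Hoeffding) for the arm-selection term, the same sign-mismatch width argument for the clipping term, and the same elliptical-potential summation and union bound giving $1-3\delta/2$. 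The only minor bookkeeping discrepancy is that in the paper the $T^{\epsilon/2}$ slack enters through the anti-concentration probability $p=(4e\sqrt{\pi T^\epsilon})^{-1}$ rather than through $v$, which instead carries the $1/\sqrt{\epsilon}$ factor.
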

Observe that this regret bound does not depend on the number of actions $N$, is sublinear in $T$, and scales only with the complexity $d$ of the interaction term, not the complexity of the baseline reward $g$. Furthermore, $\epsilon = 1/\log(T)$ can be chosen giving a regret of order $O(d^2 \sqrt{T})$. 
%DISCUSS THIS

This bound is of the same order as the baseline Thompson sampling contextual bandit in the adversarial setting when the baseline is identically zero \citep{agrawal2013thompson}. When the baseline can be modeled with $d'$ features where $d' > d$, our method achieves $O(d^2\sqrt{T})$ regret whereas the standard Thompson sampling approach has $O((d + d')^2 \sqrt{T})$ regret. Furthermore, when the baseline reward is time-varying, the worst case regret of the standard Thompson sampling approach is $O(T)$, while the regret of our method remains $O(d^2 \sqrt{T})$.

%where $a_t$ is the action chosen by the contextual bandit at time $t$ and the horizon $T$ is possibly unknown.
%COMPARE TO CLASSICAL METHODS

%-------

%Probability of taking nonzero action $P(a(t) > 0) = \pi(t)$. %, taken with probability $\pi(t)$.

%Context at time $t$ containing joint action-state features: $s_{t,a_t}$. %Form $\tilde{s}_t = [a(t) - \pi(t))s_t;s_t]$. 

%\subsection{Regret of classical Thompson Sampler}
%list, just for comparison

%
\subsection{Proof of Theorem 1 - Decomposition of the regret}

We will first bound the regret \eqref{eq:regret} at time $t$. 
\begin{align}
\mathrm{regret}(t) = \pi^*_t s_{t,\bar{a}^*_t}^T \theta - \pi_t s_{t,\bar{a}_t}^T \theta
&= (\pi^*_t - \pi_t) (s_{t,\bar{a}_t}^T \theta) + \pi^*_t (s_{t,\bar{a}^*_t}^T \theta - s_{t,\bar{a}_t}^T \theta)\\
  &\leq (\pi^*_t - \pi_t) (s_{t,\bar{a}_t}^T \theta) + (s_{t,\bar{a}^*_t}^T \theta - s_{t,\bar{a}_t}^T \theta),
\end{align}
where the inequality holds since $(s_{t,\bar{a}^*_t}^T \theta - s_{t,\bar{a}_t}^T \theta) \geq 0$ and $0<\pi_t^* < 1$ by definition. Then
\[
\mathcal{R}(T) = \sum\nolimits_{t=1}^T \mathrm{regret}(t) \leq \underbrace{\sum\nolimits_{t=1}^T (\pi^*_t - \pi_t) (s_{t,\bar{a}_t}^T \theta)}_I + \underbrace{\sum\nolimits_{t=1}^T(s_{t,\bar{a}^*_t}^T \theta - s_{t,\bar{a}_t}^T \theta)}_{II}
\]
Observe that we have decomposed the regret into a term $I$ that depends on the choice of the randomization $\pi_t$ between the zero and nonzero action, and a term $II$ that depends only on the choice of the potential nonzero action $\bar{a}_t$ prior to the randomization. We bound $I$ using concentration inequalities, and bound $II$ using arguments paralleling those for standard Thompson sampling.
%\subsection{Bound for term $I$}
%where the last inequality follows by the union bound.
%\ubsection{Bound for term $II$}
%We have the following bound for term $I$. 
\begin{lemma}\label{lemm:I}
Suppose that the conditions of Theorem 1 apply. Then with probability at least $1 - \frac{\delta}{2}$, 
$
I %&= \sum_{t=1}^T (\pi^*_t - \pi_t)s_{t,\bar{a}_t}^T \theta\\
\leq C  \sqrt{d^3 T \log(Td) \log(1/\delta)}
$
for some constant $C$ given in the proof.
\end{lemma}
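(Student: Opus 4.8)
The plan is to control $I=\sum_{t=1}^T(\pi^*_t-\pi_t)(s_{t,\bar a_t}^T\theta)$ by combining a confidence bound on $\hat\theta$ with the observation that the randomization probability $\pi_t$ departs from its extreme values $\pi_{\min},\pi_{\max}$ only inside a bounded ``transition region,'' and then summing with a standard elliptical-potential argument. Throughout I write $B_{t-1}$ for the matrix used at step $t$, set $\|x\|_{B^{-1}}:=\sqrt{x^TB_{t-1}^{-1}x}$, and abbreviate $y_t=s_{t,\bar a_t}^T\theta$.

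First I would establish a confidence interval for the linear predictor $y_t$. Setting $\xi_\tau=(I(a_\tau>0)-\pi_\tau)r_\tau(a_\tau)-\pi_\tau(1-\pi_\tau)s_{\tau,\bar a_\tau}^T\theta$, equation \eqref{eq:ExpR} together with the reward model \eqref{eq:reward} shows that $\{\xi_\tau\}$ is a bounded, hence sub-Gaussian, martingale difference sequence with respect to $\mathcal H_{\tau-1},\bar a_\tau,\bar s_\tau$. Since $\hat b=(B_{t-1}-I)\theta+\sum_{\tau<t}s_{\tau,\bar a_\tau}\xi_\tau$, we get $\hat\theta-\theta=-B_{t-1}^{-1}\theta+B_{t-1}^{-1}\sum_{\tau<t}s_{\tau,\bar a_\tau}\xi_\tau$, so by Cauchy--Schwarz in the $B_{t-1}^{-1}$ norm
\[
|s_{t,\bar a_t}^T(\hat\theta-\theta)|\le \|s_{t,\bar a_t}\|_{B^{-1}}\Big(\|\theta\|_2+\big\|\textstyle\sum_{\tau<t}s_{\tau,\bar a_\tau}\xi_\tau\big\|_{B^{-1}}\Big).
\]
A self-normalized martingale bound of Abbasi-Yadkori type controls the last term; since $\pi_\tau(1-\pi_\tau)\ge\pi_{\min}(1-\pi_{\max})>0$ we have $B_{t-1}\succeq\pi_{\min}(1-\pi_{\max})\big(I+\sum_{\tau<t}s_{\tau,\bar a_\tau}s_{\tau,\bar a_\tau}^T\big)$, which lets me pass from the unweighted Gram matrix appearing in the self-normalized inequality to $B_{t-1}$ at the cost of a constant. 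On an event of probability at least $1-\delta/2$ this yields, uniformly in $t$, $|s_{t,\bar a_t}^T(\hat\theta-\theta)|\le \ell_T\|s_{t,\bar a_t}\|_{B^{-1}}$ with $\ell_T=O(\sqrt{d\log(Td/\delta)})$.

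The heart of the argument is a per-step case analysis. Because $\tilde\theta\sim\mathcal N(\hat\theta,v^2B_{t-1}^{-1})$, the probability in \eqref{eq:nzprob} equals $\Phi(z_t)$ with $z_t=s_{t,\bar a_t}^T\hat\theta/(v\|s_{t,\bar a_t}\|_{B^{-1}})$, so $\pi_t=\mathrm{clip}_{[\pi_{\min},\pi_{\max}]}(\Phi(z_t))$. Put $z_{\max}=\Phi^{-1}(\pi_{\max})$, $z_{\min}=\Phi^{-1}(\pi_{\min})$ and $c=\max(z_{\max},|z_{\min}|)$. If $s_{t,\bar a^*_t}^T\theta>0$ then $\pi^*_t=\pi_{\max}$ by \eqref{eq:policy} and the summand is $(\pi_{\max}-\pi_t)y_t$; when $y_t\le0$ this is nonpositive, and when $y_t>0$ it vanishes unless $\pi_t<\pi_{\max}$, i.e. $z_t<z_{\max}$, in which case $y_t\le s_{t,\bar a_t}^T\hat\theta+\ell_T\|s_{t,\bar a_t}\|_{B^{-1}}\le(cv+\ell_T)\|s_{t,\bar a_t}\|_{B^{-1}}$. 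Symmetrically, if $s_{t,\bar a^*_t}^T\theta\le0$ then $\pi^*_t=\pi_{\min}$ and $y_t\le0$, so $(\pi_{\min}-\pi_t)y_t=(\pi_t-\pi_{\min})(-y_t)$ vanishes unless $z_t>z_{\min}$, giving $-y_t\le(cv+\ell_T)\|s_{t,\bar a_t}\|_{B^{-1}}$. Bounding each probability gap by $1$, every summand satisfies $(\pi^*_t-\pi_t)y_t\le(cv+\ell_T)\|s_{t,\bar a_t}\|_{B^{-1}}$.

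It then remains to sum. By Cauchy--Schwarz $\sum_t\|s_{t,\bar a_t}\|_{B^{-1}}\le\sqrt{T\sum_t\|s_{t,\bar a_t}\|_{B^{-1}}^2}$, and the elliptical-potential lemma (again using $\pi_\tau(1-\pi_\tau)\ge\pi_{\min}(1-\pi_{\max})$ to relate the weighted increments of $B$ to $\log\det B_T$) gives $\sum_t\|s_{t,\bar a_t}\|_{B^{-1}}^2=O(d\log(Td))$. Hence $I\le(cv+\ell_T)\cdot O(\sqrt{dT\log(Td)})$, and with $v,\ell_T=O(\sqrt{d\log(Td/\delta)})$ this is $O(\sqrt{d^3T\log(Td)\log(1/\delta)})$, as claimed. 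I expect the main obstacles to be the two places where the action-centering weights intervene: setting up the self-normalized bound for the centered, randomized reward and transferring it to the weighted matrix $B$, and making the clipping case analysis airtight so that any departure of $\pi_t$ from its boundary value forces $|s_{t,\bar a_t}^T\hat\theta|$ to be of order $v\|s_{t,\bar a_t}\|_{B^{-1}}$. The elliptical-potential summation is then routine.
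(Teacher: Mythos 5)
Your proof is correct, and the two outer ingredients coincide with the paper's: your confidence bound for $\hat\theta$ built from the centered pseudo-rewards is exactly the paper's event $E^\mu(t)$ (the paper proves it via the same Abbasi-Yadkori self-normalized bound, but folds the weights into the martingale by taking $m_\tau=\sqrt{\pi_\tau(1-\pi_\tau)}\,s_{\tau,\bar a_\tau}$ and $\eta_\tau=\xi_\tau/\sqrt{\pi_\tau(1-\pi_\tau)}$ so that the self-normalizing matrix is exactly $B$, whereas you lower-bound $B$ by $\pi_{\min}(1-\pi_{\max})$ times the unweighted Gram matrix --- both are valid), and your Cauchy--Schwarz plus elliptical-potential summation is the paper's bound $\sum_t z_{t,\bar a_t}\le (5/C_\pi)\sqrt{dT\log T}$. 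Where you genuinely diverge is the per-step bound on $(\pi^*_t-\pi_t)\,s_{t,\bar a_t}^T\theta$. The paper bounds this by $\min\bigl[\,|s_{t,\bar a_t}^T\theta|,\ \mathbb{P}(\mathrm{sign}(s_{t,\bar a_t}^T\tilde\theta)\neq\mathrm{sign}(s_{t,\bar a_t}^T\theta))\bigr]$ and controls the sign-flip probability through a second concentration event $E^\theta_0(t)$ for the Gaussian sample $\tilde\theta$ around $\hat\theta$, which costs a width $\sqrt{4d\log(Td)}\,v\,z_{t,\bar a_t}$ plus an additive $1-\mathbb{P}(E^\theta_0(t))\le 1/T^2$ per step. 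You instead exploit the closed form $\pi_t=\max(\pi_{\min},\min(\pi_{\max},\Phi(s_{t,\bar a_t}^T\hat\theta/(v\,z_{t,\bar a_t}))))$, with $\Phi$ the normal CDF: whenever the summand is nonzero, $\pi_t$ sits off the relevant clipping boundary, which forces $|s_{t,\bar a_t}^T\hat\theta|\le c\,v\,z_{t,\bar a_t}$ for a constant $c$ depending only on $\pi_{\min},\pi_{\max}$. This eliminates $E^\theta_0(t)$ and the $1/T^2$ terms entirely and replaces the factor $\sqrt{4d\log(Td)}\,v$ by $c\,v$, so your per-step bound $(cv+\ell_T)z_{t,\bar a_t}$ is tighter than the paper's; carried through the summation it would in fact give $I=O\bigl(d\sqrt{T}\cdot\mathrm{polylog}\bigr)$, strictly stronger than the $\sqrt{d^3T\log(Td)\log(1/\delta)}$ the lemma asserts. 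What the paper's sign-flip argument buys in exchange is that it never uses the functional form of $\pi_t$, only that $\tilde\theta$ concentrates, so it would survive a change in how the randomization probability is computed; your argument is tied to the Gaussian-CDF clipping rule of \eqref{eq:nzprob}.

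Two small points to tighten. First, $\xi_\tau$ is not literally bounded: $r_\tau(a_\tau)$ contains the sub-Gaussian noise $n_\tau$, so $\xi_\tau$ is bounded-plus-sub-Gaussian, which is still conditionally sub-Gaussian with a constant parameter --- this is precisely the paper's Lemma \ref{lemm:Mart}, so the fix is cosmetic. Second, your case split should be stated so that ties ($s_{t,\bar a^*_t}^T\theta=0$) fall into the second case; they do, since then $s_{t,\bar a_t}^T\theta\le 0$, but it is worth saying explicitly.
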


%The proof is in Section \ref{supp:lem1} of the supplement.
%If term I unsaturated, then $|s_{t,\bar{a}_t}^T \theta|$ small by definition. Otherwise, $|\pi_t^* - \pi_t|$ small. 
%The bound for term $II$ is as follows.
%We now move to term $II$. 
\begin{lemma}\label{lemm:II}
Suppose that the conditions of Theorem 1 apply. Then term $II$ can be bounded as
\[
II = \sum_{t=1}^T (s_{t,\bar{a}^*_t}^T \theta - s_{t,\bar{a}_t}^T \theta) \leq C' \left(\frac{d^2}{\epsilon} \sqrt{T^{1+\epsilon}} \log \frac{1}{\delta} \log (Td)\right)
\]
where the inequality holds with probability at least $1-\delta$.
\end{lemma}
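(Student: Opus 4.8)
The plan is to adapt the linear Thompson sampling regret analysis of \citet{agrawal2013thompson} to our action-centered estimator. The crucial structural observation is that term $II$ depends only on which \emph{nonzero} arm $\bar{a}_t$ is selected and not on the zero/nonzero randomization, so $II$ is exactly the cumulative regret of a linear Thompson sampler that plays $\bar{a}_t = \arg\max_{a} s_{t,a}^T\theta'$ with $\theta'\sim\mathcal N(\hat\theta, v^2 B^{-1})$ -- the only difference from the standard setting being that $\hat\theta = B^{-1}\hat b$ and $B$ are assembled from the centered estimate $\hat r_t(\bar a_t)$ and the weights $\pi_t(1-\pi_t)$. I would therefore reproduce the three ingredients of that analysis: (i) a concentration bound for $\hat\theta$ around $\theta$; (ii) a concentration bound for the sample $\theta'$ around $\hat\theta$; and (iii) an optimism-plus-potential argument converting (i)--(ii) into the regret bound.

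The delicate step, and the one I expect to be the main obstacle, is (i). Substituting $r_t(a_t) = g_t(\bar s_t) + s_{t,a_t}^T\theta\, I(a_t>0) + n_t$ into the definition of $\hat b$ and using $B - I = \sum_t \pi_t(1-\pi_t)s_{t,\bar a_t}s_{t,\bar a_t}^T$, I would write the martingale decomposition $\hat b - (B-I)\theta = \sum_{t=1}^T s_{t,\bar a_t}\,\eta_t$, where $\eta_t = (I(a_t>0)-\pi_t)r_t(a_t) - \pi_t(1-\pi_t)s_{t,\bar a_t}^T\theta$. By the unbiasedness computation \eqref{eq:ExpR}, the centering factor $I(a_t>0)-\pi_t$ has conditional mean zero given $\mathcal H_{t-1},\bar a_t,\bar s_t$, so the \emph{arbitrary} baseline $g_t(\bar s_t)$ contributes nothing to $\mathbb{E}[\hat b]$ and enters $\eta_t$ only through the bounded, mean-zero term $(I(a_t>0)-\pi_t)g_t(\bar s_t)$. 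Since $g_t = O(1)$, $n_t$ is sub-Gaussian with variance $\sigma^2$, and the rewards are bounded, each $\eta_t$ is a sub-Gaussian martingale difference whose variance proxy depends on the reward bound and $\sigma$ but \emph{not} on the complexity or time-variation of $g_t$. A self-normalized martingale tail inequality then yields, on an event of probability at least $1-\delta$, $\|\hat\theta - \theta\|_B \le \ell_T$ with $\ell_T = O(\sqrt{d\log(Td/\delta)})$, and hence $|s_{t,a}^T(\hat\theta-\theta)| \le \ell_T\,\|s_{t,a}\|_{B^{-1}}$ by Cauchy--Schwarz, where $\|s_{t,a}\|_{B^{-1}} = \sqrt{s_{t,a}^T B^{-1} s_{t,a}}$. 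This is precisely where action centering earns its keep: the width $\ell_T$ is independent of $g_t$.

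Given (i), step (ii) is routine: since $s_{t,a}^T(\theta'-\hat\theta)$ is Gaussian with standard deviation $v\|s_{t,a}\|_{B^{-1}}$, a Gaussian tail bound and a union bound over the $N$ arms control $|s_{t,a}^T(\theta'-\hat\theta)|$, and combining with (i) gives a single deviation width $\gamma_t = O(\ell_T + v\sqrt{\log(Nt)})$ bounding $|s_{t,a}^T(\theta'-\theta)|$ for all arms. For step (iii) I would call an arm saturated if its gap $s_{t,\bar a^*_t}^T\theta - s_{t,a}^T\theta$ exceeds $\gamma_t\|s_{t,a}\|_{B^{-1}}$; Gaussian anti-concentration guarantees that with at least a constant probability $p$ the sample $\theta'$ is optimistic for $\bar a^*_t$, forcing $\bar a_t$ to be either optimal or unsaturated, so the per-round contribution to $II$ is at most (up to the $1/p$ optimism factor) a constant multiple of $\gamma_t\|s_{t,\bar a_t}\|_{B^{-1}}$.

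Finally I would sum over $t$ via the elliptical potential (matrix-determinant) lemma. Here the probability constraint is essential: because $\pi_t(1-\pi_t) \ge \pi_{\min}(1-\pi_{\max}) > 0$, the matrix $B$ accumulates curvature at every step, so $\sum_t \|s_{t,\bar a_t}\|_{B^{-1}}^2 = O(d\log T)$ and, by Cauchy--Schwarz, $\sum_t \|s_{t,\bar a_t}\|_{B^{-1}} = O(\sqrt{dT\log T})$. Absorbing a supermartingale tail term that accounts for the rounds on which the optimism event fails yields the claimed bound $C'\big(\tfrac{d^2}{\epsilon}\sqrt{T^{1+\epsilon}}\log\tfrac1\delta\,\log(Td)\big)$, with the $T^{\epsilon}$ factor and the $1/\epsilon$ arising exactly as in the corresponding supermartingale step of \citet{agrawal2013thompson}. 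The entire argument goes through verbatim from the linear-payoff case \emph{once} step (i) is established, so the whole burden of the lemma rests on showing that the centered, weighted estimator concentrates at the standard $O(\sqrt{d\log(Td/\delta)})$ rate with a constant that is agnostic to $g_t$.
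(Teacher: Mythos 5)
Your proposal follows essentially the same route as the paper's proof: the same martingale decomposition of the action-centered estimator controlled by the self-normalized inequality of \cite{abbasi2011improved} (modulo rescaling $s_{t,\bar a_t}$ by $\sqrt{\pi_t(1-\pi_t)}$ so that the self-normalizing matrix is exactly $B$, which is how the paper's Lemma \ref{lemm:AG1} and Lemma \ref{lemm:Mart} make the baseline $g_t$ enter only as a bounded, conditionally mean-zero noise term), the same Gaussian concentration of the Thompson sample, the same saturated/unsaturated-arm optimism argument with $p=\frac{1}{4e\sqrt{\pi T^\epsilon}}$, and the same supermartingale/Azuma--Hoeffding step combined with the weighted elliptical-potential bound $\sum_t z_{t,\bar a_t}\le \frac{5}{C_\pi}\sqrt{dT\log T}$. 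The one detail you should change is in your step (ii): use the $d$-dimensional Gaussian concentration giving width $\sqrt{4d\log(Td)}\,v\,z_{t,\bar a}$ uniformly over arms (the paper's event $E^\theta(t)$, Lemma \ref{lemm:eoth}) instead of a union bound over the $N$ arms, since the latter puts a $\log N$ factor into the regret while the stated bound---and Theorem 1's headline claim---must be independent of $N$.
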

The proofs are contained in Sections \ref{supp:lem1} and \ref{supp:lem2} in the supplement respectively. %We create concentration bounds on our estimator that allow us to exploit the proof framework of the standard Thompson sampling contextual bandit \cite{agrawal2013thompson}. 
In the derivation, the ``pseudo-actions'' $\bar{a}_t$ that Algorithm 1 chooses prior to the $\pi_t$ baseline-nonzero randomization correspond to the actions in the standard contextual bandit setting. Note that $I$ involves only $\bar{a}_t$, not $\bar{a}^*_t$, hence it is not surprising that the bound is smaller than that for $II$. Combining Lemmas \ref{lemm:I} and \ref{lemm:II} via the union bound gives Theorem 1.

\section{Results}
%In this section, we evaluate the proposed action-centered Thompson sampling approach on both simulated data and data from a mobile health study, comparing to the baseline Thompson sampling contextual bandit in both cases. 
\subsection{Simulated data}
%Show regret curves in various scenarios. Compare to standard Thompson sampling. 

%note: experiments done with a = 0 or 1. Redo for multiple actions. 

We first conduct experiments with simulated data, using $N=2$ possible nonzero actions. In each experiment, we choose a true reward generative model $r_t(s,a)$ inspired by data from the HeartSteps study (for details see Section \ref{app:simMod} in the supplement), and generate two length $T$ sequences of state vectors $s_{t,a} \in \mathbb{R}^{NK}$ and $\bar{s}_t \in \mathbb{R}^L$, where the $\bar{s}_t$ are iid Gaussian and $s_{t,a}$ is formed by stacking columns $I(a = i) [1; \bar{s}_t ]$ for $i = 1,\dots, N$. We consider both nonlinear and nonstationary baselines, while keeping the treatment effect models the same. 
% as
%\begin{align*}
%s_{t,a_t} &= a_t \left[\begin{array}{c}1 \\ \mathcal{N}(0_{K},I_{K})\end{array}\right],\quad
%\tilde{s}_t = \left[\begin{array}{c}1 \\ \mathcal{N}(0_{L},I_{L})\end{array}\right]
%\end{align*}
%where $0_K$ is the vector of zeros of length $K$, and $t = 1,\dots, T$. 
The bandit under evaluation iterates through the $T$ time points, at each choosing an action and receiving a reward generated according to the chosen model. We set $\pi_{\min} = 0.2, \pi_{\max} = 0.8$.

%\begin{figure}[h]
%\centering
%%\includegraphics[width=3in]{FiguresTalk/FigNoneE.eps}%\\
%\begin{subfigure}[b]{.45\textwidth}
%\centering
%\includegraphics[width=2.25in]{}
%\caption{Nonlinear $\bar{f}$}
%\end{subfigure}
%\hfill
%\begin{subfigure}[b]{.45\textwidth}
%\includegraphics[width=2.25in]{}%\includegraphics[width=2in]{FiguresTalk/
%%FigNoneClassE.eps}
%\caption{Time varying $\bar{f}_t$}
%\end{subfigure}
%\caption{Nonlinear and time varying baseline reward $\bar{f}$: Cumulative regret for proposed Action-Centered approach, compared to baseline contextual bandit. Interaction term is 2-dimensional, and model for baseline reward is 20-dimensional. }
%%Right: Full $K=2$ ($\Omega = \Psi_1 \oplus \Psi_2$), 225-node Kronecker sum of $\Psi_1$ with an ER graph of size 9 ($\Psi_2$).  }
%\label{Fig:HighD}
%\end{figure}
At each time step, the reward under the optimal policy is calculated and compared to the reward received by the bandit to form the regret $\mathrm{regret}(t)$. We can then plot the cumulative regret
\[
\mathrm{cumulative \: regret}(t) = \sum\nolimits_{\tau = 1}^t \mathrm{regret}(\tau). 
\]
\begin{figure}[h]
\centering
\begin{subfigure}[b]{.45\textwidth}
\centering
\includegraphics[width=2.25in]{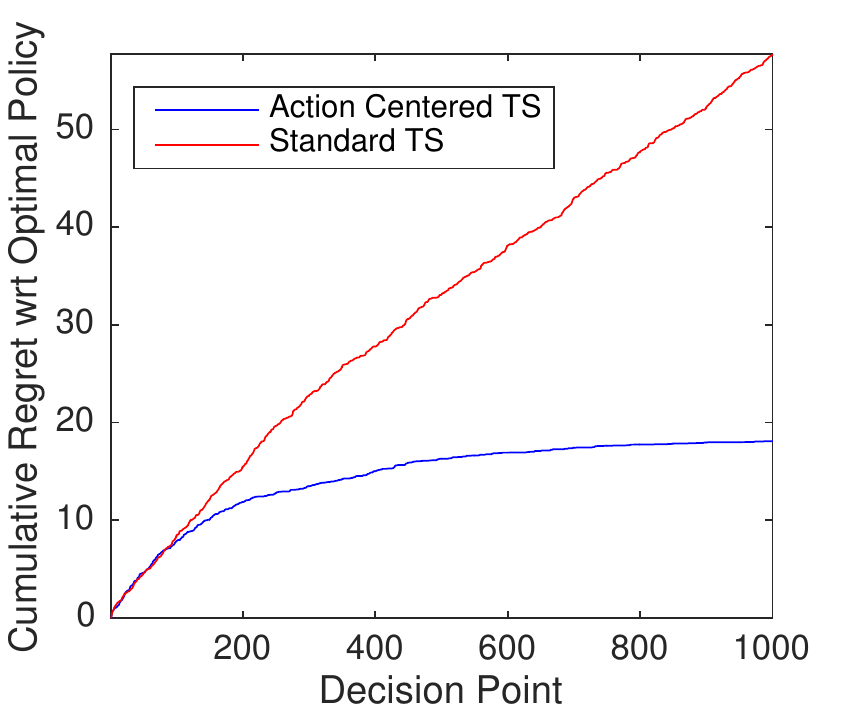}
\caption{Median cumulative regret}
\end{subfigure}
\hfill
\begin{subfigure}[b]{.45\textwidth}
\includegraphics[width=2.25in]{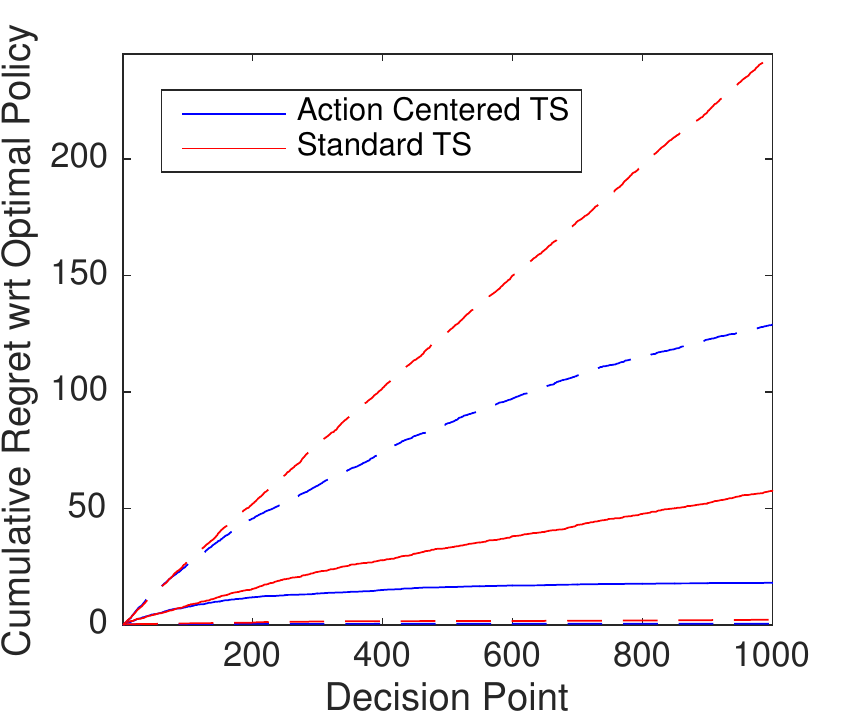}%\includegraphics[width=2in]{FiguresTalk/
%FigNoneClassE-eps-converted-to.pdf}
\caption{Median with 1st and 3rd quartiles (dashed)}
\end{subfigure}
\caption{Nonlinear baseline reward $g$, in scenario with 2 nonzero actions and reward function based on collected HeartSteps data. Cumulative regret shown for proposed Action-Centered approach, compared to baseline contextual bandit, median computed over 100 random trials. }
%Right: Full $K=2$ ($\Omega = \Psi_1 \oplus \Psi_2$), 225-node Kronecker sum of $\Psi_1$ with an ER graph of size 9 ($\Psi_2$).  }
\label{Fig:NonLin}
\end{figure}
\begin{figure}[h]
\centering
\begin{subfigure}[b]{.45\textwidth}
\centering
\includegraphics[width=2.25in]{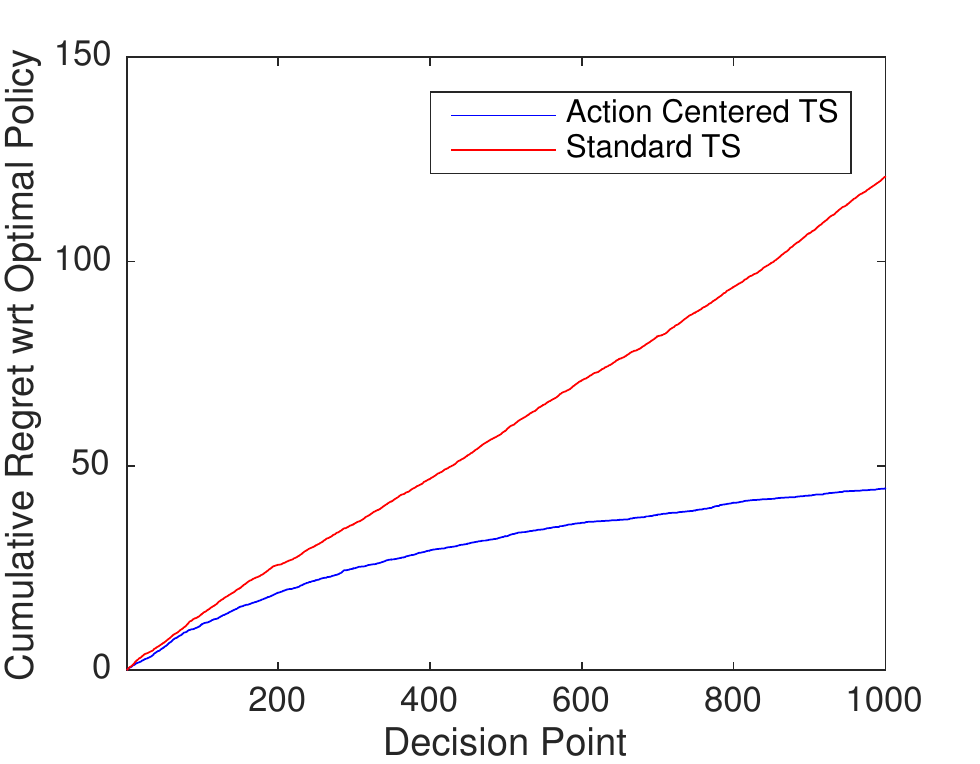}
\caption{Median cumulative regret}
\end{subfigure}
\hfill
\begin{subfigure}[b]{.45\textwidth}
\includegraphics[width=2.25in]{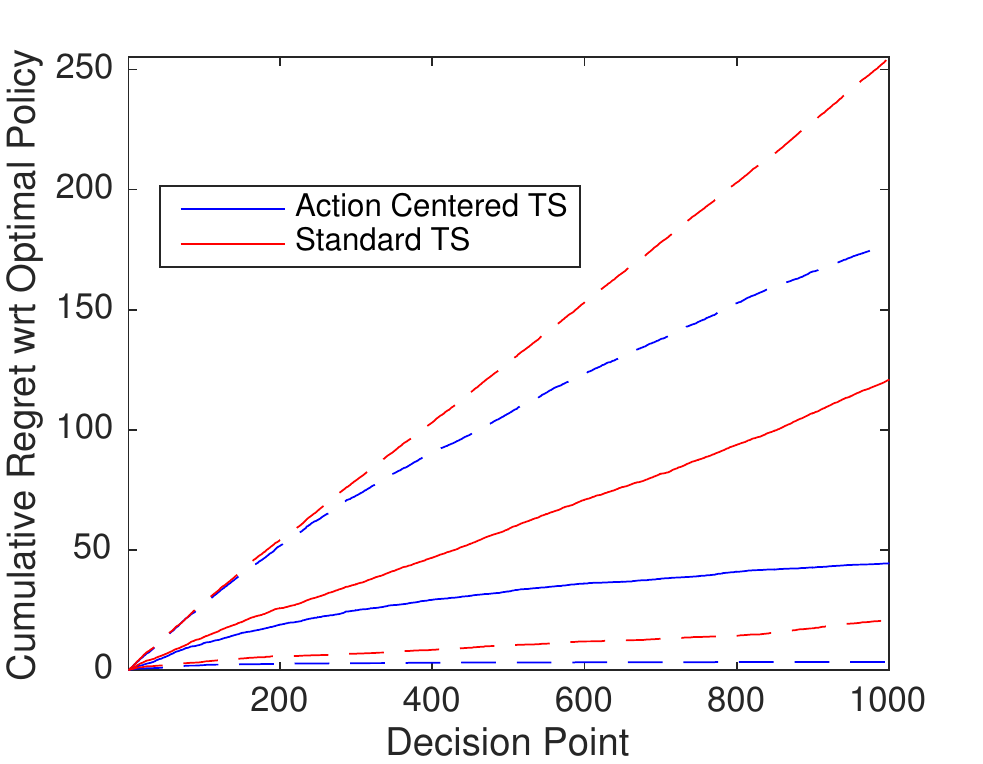}%\includegraphics[width=2in]{FiguresTalk/
%FigNoneClassE-eps-converted-to.pdf}
\caption{Median with 1st and 3rd quartiles (dashed)}
\end{subfigure}
\caption{Nonstationary baseline reward $g$, in scenario with 2 nonzero actions and reward function based on collected HeartSteps data. Cumulative regret shown for proposed Action-Centered approach, compared to baseline contextual bandit, median computed over 100 random trials.}
%Right: Full $K=2$ ($\Omega = \Psi_1 \oplus \Psi_2$), 225-node Kronecker sum of $\Psi_1$ with an ER graph of size 9 ($\Psi_2$).  }
\label{Fig:NonStat}
\end{figure}
In the first experiment, the baseline reward is nonlinear. Specifically, we generate rewards using
$
r_t(s_{t,a_t},\bar{s}_t, a_t)= \theta^T s_{t,a_t} + 2 I(|[\bar{s}_t]_1| < 0.8)  + n_t
$
where $n_t = \mathcal{N}(0, 1)$ and $\theta\in \mathbb{R}^8$ is a fixed vector listed in supplement section \ref{app:simMod}. This simulates the quite likely scenario that for a given individual the baseline reward is higher for small absolute deviations from the mean of the first context feature, i.e. rewards are higher when the feature at the decision point is ``near average'', with reward decreasing for abnormally high or low values. We run the benchmark Thompson sampling algorithm \citep{agrawal2013thompson} and our proposed action-centered Thompson sampling algorithm, computing the cumulative regrets and taking the median over 500 random trials. The results are shown in Figure \ref{Fig:NonLin}, demonstrating linear growth of the benchmark Thompson sampling algorithm and significantly lower, sublinear regret for our proposed method.

%Scenarios to consider:

%- Linear $\bar{f}$

%- High dimensional linear $\bar{f}$

%- highly nonlinear $\bar{f}$

%- nonstationary $\bar{f}$
We then consider a scenario with the baseline reward $g_t(\cdot)$ function changing in time. We generate rewards as
$
r_t(s_{t,a_t},\bar{s}_t, a_t)= \theta^T s_{t,a_t} + \eta_t^T \bar{s}_t + n_t
$
where $n_t = \mathcal{N}(0, 1)$, $\theta$ is a fixed vector as above, and $\eta_t \in \mathbb{R}^7$, $\bar{s}_t$ are generated as smoothly varying Gaussian processes (supplement Section \ref{app:simMod}). The cumulative regret is shown in Figure \ref{Fig:NonStat}, again demonstrating linear regret for the baseline approach and significantly lower sublinear regret for our proposed action-centering algorithm as expected. 

%\begin{figure}[h]
%\centering
%%\includegraphics[width=3in]{FiguresTalk/FigNoneE-eps-converted-to.pdf}%\\
%\includegraphics[width=4in]{}%\includegraphics[width=2in]{FiguresTalk/
%%FigNoneClassE-eps-converted-to.pdf}
%\caption{Time-varying $\bar{f}$: Cumulative regret for proposed Action-Centered approach, compared to baseline contextual bandit. Interaction term is 2-dimensional, and model for baseline reward is also 2-dimensional, but changes slowly in time (Gaussian process). }
%%Right: Full $K=2$ ($\Omega = \Psi_1 \oplus \Psi_2$), 225-node Kronecker sum of $\Psi_1$ with an ER graph of size 9 ($\Psi_2$).  }
%\label{Fig:Nonstat}
%\end{figure}

%\section{Basic Algorithm - Nonparametric}
%\begin{figure}[H]
%\centering
%\includegraphics[width=4in]{bandit-eps-converted-to.pdf}
%%\includegraphics[width=3in]{FiguresTalk/Data/MDML_D1_randperm_A4-eps-converted-to.pdf}
%\caption{ Typical result for ``$f(s_t)$ estimation error" by round using action-centered Thompson sampling. For this experiment, $\bar{f}(s_t) = \bar{\theta}^T s_t$, where $\bar{\theta}$ is random in ${R}^10$, and gives on average a reward with magnitude 10 times that of the action-related $f$ reward. The results demonstrate successful convergence to a good estimate in an extremely difficult problem.}
%\label{Fig:RealChangeBoth}
%\end{figure}

%DO SIMULATIONS WHERE WE BREAK EXISTING STANDARD ALGORITHM.  

%DO HEARTSTEPS V1 DATA. DO UNBIASED OFFLINE EVALUATION OF THIS. 

\subsection{HeartSteps study data}

%ANY NEEDED STUDY DETAILS HERE. Tech report, etc? What we are doing in the study.

The HeartSteps study collected the sensor and weather-based features shown in Figure \ref{Fig:HS1Feats} at 5 decision points per day for each study participant. If the participant was available at a decision point, a message was sent with constant probability 0.6. The sent message could be one of several activity or anti-sedentary messages chosen by the system. The reward for that message was defined to be $\log(0.5 + x)$ where $x$ is the step count of the participant in the 30 minutes following the suggestion. 
As noted in the introduction, the baseline reward, i.e. the step count of a subject when no message is sent, does not only depend on the state in a complex way but is likely dependent on a large number of unobserved variables. Because of these unobserved variables, the mapping from the observed state to the reward is believed to be strongly time-varying. Both these characteristics (complex, time-varying baseline reward function) suggest the use of the action-centering approach. % to orthogonalize away the effect of the baseline and focus on modeling only the change in reward due to sending a message. 
%If the interaction portion of the underlying reward function is simpler than the baseline portion, we expect the 

We run our contextual bandit on the HeartSteps data, considering the binary action of whether or not to send a message at a given decision point based on the features listed in Figure \ref{Fig:HS1Feats} in the supplement. Each user is considered independently, for maximum personalization and independence of results. As above we set $\pi_{\min} = 0.2, \pi_{\max} = 0.8$. 

We perform offline evaluation of the bandit using the method of \cite{li2011unbiased}. \cite{li2011unbiased} uses the sequence of states, actions, and rewards in the data to form an near-unbiased estimate of the average expected reward achieved by each algorithm, averaging over all users. We used a total of 33797 time points to create the reward estimates. The resulting estimates for the improvement in average reward over the baseline randomization, averaged over 100 random seeds of the bandit algorithm, are shown in Figure \ref{Fig:HS1} of the supplement with the proposed action-centering approach achieving the highest reward. Since the reward is logarithmic in the number of steps, the results imply that the benchmark Thompson sampling approach achieves an average 1.6\% increase in step counts over the non-adaptive baseline, while our proposed method achieves an increase of 3.9\%. 

%REMARK ON WHY NO UNCERTAINTY ON 0 ALSO MEANING OF GAINS. SAY WHY LOGARITHMIC

\section{Conclusion}
Motivated by emerging challenges in adaptive decision making in mobile health, in this paper we proposed the action-centered Thompson sampling contextual bandit, exploiting the randomness of the Thompson sampler and an action-centering approach to orthogonalize out the baseline reward. We proved that our approach enjoys low regret bounds that scale only with the complexity of the interaction term, allowing the baseline reward to be arbitrarily complex and time-varying.

%An open question is how this methodology might apply to nonlinear (e.g. neural network) bandit methods. In particular, the action-centering method may reduce the required complexity of a potential neural-network based bandit approach in larger mobile health studies, allowing it to learn on less data and be robust to time-varying baselines.   

\subsubsection*{Acknowledgments}
This work was supported in part by grants R01 AA023187, P50 DA039838, U54EB020404, R01 HL125440 NHLBI/NIA, NSF CAREER IIS-1452099, and a Sloan Research Fellowship.
%Use unnumbered third level headings for the acknowledgments. All
%%acknowledgments go at the end of the paper. Do not include
%acknowledgments in the anonymized submission, only in the final paper.
%\begin{comment}

%\end{comment}
%\section*{References}
\bibliographystyle{icml2017}
\bibliography{cb}

\begin{thebibliography}{19}
\providecommand{\natexlab}[1]{#1}
\providecommand{\url}[1]{\texttt{#1}}
\expandafter\ifx\csname urlstyle\endcsname\relax
  \providecommand{\doi}[1]{doi: #1}\else
  \providecommand{\doi}{doi: \begingroup \urlstyle{rm}\Url}\fi

\bibitem[Abbasi-Yadkori et~al.(2011)Abbasi-Yadkori, P{\'a}l, and
  Szepesv{\'a}ri]{abbasi2011improved}
Abbasi-Yadkori, Yasin, P{\'a}l, D{\'a}vid, and Szepesv{\'a}ri, Csaba.
\newblock Improved algorithms for linear stochastic bandits.
\newblock In \emph{Advances in Neural Information Processing Systems}, pp.\
  2312--2320, 2011.

\bibitem[Abe \& Nakamura(1999)Abe and Nakamura]{abe1999learning}
Abe, Naoki and Nakamura, Atsuyoshi.
\newblock Learning to optimally schedule internet banner advertisements.
\newblock In \emph{Proceedings of the Sixteenth International Conference on
  Machine Learning}, pp.\  12--21. Morgan Kaufmann Publishers Inc., 1999.

\bibitem[Agrawal \& Goyal(2013)Agrawal and Goyal]{agrawal2013thompson}
Agrawal, Shipra and Goyal, Navin.
\newblock Thompson sampling for contextual bandits with linear payoffs.
\newblock In \emph{Proceedings of the 30th International Conference on Machine
  Learning (ICML-13)}, pp.\  127--135, 2013.

\bibitem[Auer et~al.(2002)Auer, Cesa-Bianchi, Freund, and
  Schapire]{auer2002nonstochastic}
Auer, Peter, Cesa-Bianchi, Nicolo, Freund, Yoav, and Schapire, Robert~E.
\newblock The nonstochastic multiarmed bandit problem.
\newblock \emph{SIAM Journal on Computing}, 32\penalty0 (1):\penalty0 48--77,
  2002.

\bibitem[Bastani \& Bayati(2015)Bastani and Bayati]{bastani2015online}
Bastani, Hamsa and Bayati, Mohsen.
\newblock Online decision-making with high-dimensional covariates.
\newblock \emph{Available at SSRN 2661896}, 2015.

\bibitem[Bubeck \& Cesa-Bianchi(2012)Bubeck and Cesa-Bianchi]{bubeck2012regret}
Bubeck, S{\'e}bastien and Cesa-Bianchi, Nicolo.
\newblock Regret analysis of stochastic and nonstochastic multi-armed bandit
  problems.
\newblock \emph{Foundations and Trends in Machine Learning}, 5\penalty0
  (1):\penalty0 1--122, 2012.

\bibitem[Chu et~al.(2011)Chu, Li, Reyzin, and Schapire]{chu2011contextual}
Chu, Wei, Li, Lihong, Reyzin, Lev, and Schapire, Robert~E.
\newblock Contextual bandits with linear payoff functions.
\newblock In \emph{International Conference on Artificial Intelligence and
  Statistics}, pp.\  208--214, 2011.

\bibitem[Dudik et~al.(2011)Dudik, Hsu, Kale, Karampatziakis, Langford, Reyzin,
  and Zhang]{dudik2011efficient}
Dudik, Miroslav, Hsu, Daniel, Kale, Satyen, Karampatziakis, Nikos, Langford,
  John, Reyzin, Lev, and Zhang, Tong.
\newblock Efficient optimal learning for contextual bandits.
\newblock In \emph{Proceedings of the Twenty-Seventh Conference Annual
  Conference on Uncertainty in Artificial Intelligence}, pp.\  169--178. AUAI
  Press, 2011.

\bibitem[Klasnja et~al.(2015)Klasnja, Hekler, Shiffman, Boruvka, Almirall,
  Tewari, and Murphy]{klasnja2015microrandomized}
Klasnja, Predrag, Hekler, Eric~B., Shiffman, Saul, Boruvka, Audrey, Almirall,
  Daniel, Tewari, Ambuj, and Murphy, Susan~A.
\newblock Microrandomized trials: An experimental design for developing
  just-in-time adaptive interventions.
\newblock \emph{Health Psychology}, 34\penalty0 (Suppl):\penalty0 1220--1228,
  Dec 2015.

\bibitem[Li et~al.(2010)Li, Chu, Langford, and Schapire]{li2010contextual}
Li, Lihong, Chu, Wei, Langford, John, and Schapire, Robert~E.
\newblock A contextual-bandit approach to personalized news article
  recommendation.
\newblock In \emph{Proceedings of the 19th International Conference on World
  Wide Web}, pp.\  661--670. ACM, 2010.

\bibitem[Li et~al.(2011)Li, Chu, Langford, and Wang]{li2011unbiased}
Li, Lihong, Chu, Wei, Langford, John, and Wang, Xuanhui.
\newblock Unbiased offline evaluation of contextual-bandit-based news article
  recommendation algorithms.
\newblock In \emph{Proceedings of the fourth ACM international conference on
  Web search and data mining}, pp.\  297--306. ACM, 2011.

\bibitem[Liao et~al.(2015)Liao, Klasnja, Tewari, and Murphy]{liao2015sample}
Liao, Peng, Klasnja, Predrag, Tewari, Ambuj, and Murphy, Susan~A.
\newblock Sample size calculations for micro-randomized trials in mhealth.
\newblock \emph{Statistics in medicine}, 2015.

\bibitem[May et~al.(2012)May, Korda, Lee, and Leslie]{may2012optimistic}
May, Benedict~C., Korda, Nathan, Lee, Anthony, and Leslie, David~S.
\newblock Optimistic {Bayesian} sampling in contextual-bandit problems.
\newblock \emph{The Journal of Machine Learning Research}, 13\penalty0
  (1):\penalty0 2069--2106, 2012.

\bibitem[Puterman(2005)]{puterman2005markov}
Puterman, Martin~L.
\newblock \emph{Markov decision processes: discrete stochastic dynamic
  programming}.
\newblock John Wiley \& Sons, 2005.

\bibitem[Seldin et~al.(2011)Seldin, Auer, Shawe-Taylor, Ortner, and
  Laviolette]{seldin2011pac}
Seldin, Yevgeny, Auer, Peter, Shawe-Taylor, John~S., Ortner, Ronald, and
  Laviolette, Fran{\c{c}}ois.
\newblock {PAC-Bayesian} analysis of contextual bandits.
\newblock In \emph{Advances in Neural Information Processing Systems}, pp.\
  1683--1691, 2011.

\bibitem[Slivkins(2014)]{slivkins2014contextual}
Slivkins, Aleksandrs.
\newblock Contextual bandits with similarity information.
\newblock \emph{The Journal of Machine Learning Research}, 15\penalty0
  (1):\penalty0 2533--2568, 2014.

\bibitem[Sutton \& Barto(1998)Sutton and Barto]{sutton1998reinforcement}
Sutton, Richard~S and Barto, Andrew~G.
\newblock \emph{Reinforcement learning: An introduction}.
\newblock MIT Press, 1998.

\bibitem[Tewari \& Murphy(2017)Tewari and Murphy]{tewari2017ads}
Tewari, Ambuj and Murphy, Susan~A.
\newblock From ads to interventions: Contextual bandits in mobile health.
\newblock In Rehg, Jim, Murphy, Susan~A., and Kumar, Santosh (eds.),
  \emph{Mobile Health: Sensors, Analytic Methods, and Applications}. Springer,
  2017.

\bibitem[Valko et~al.(2013)Valko, Korda, Munos, Flaounas, and
  Cristianini]{valko2013finite}
Valko, Michal, Korda, Nathan, Munos, R{\'e}mi, Flaounas, Ilias, and
  Cristianini, Nello.
\newblock Finite-time analysis of kernelised contextual bandits.
\newblock In \emph{Uncertainty in Artificial Intelligence}, pp.\  654, 2013.

\end{thebibliography}
%\begin{comment}
\begin{appendices}
\section{HeartSteps feature list}
Figure \ref{Fig:HS1Feats} shows the features available to the bandit in the HeartSteps study dataset, and Figure \ref{Fig:HS1} shows the estimated average regret results with errorbars.
\begin{figure}[h]
\centering
\scriptsize{
\begin{tabular}{p{3cm}|p{3cm}|p{3cm}|c|c}
\textbf{Feature}             & \textbf{Description} & \textbf{Purpose}& \textbf{Interaction} & \textbf{Baseline Model }   \\
\hline
Number of messages sent & Total number of messages sent to user in prior week & Modeling habituation to intervention & Y & Y\\
\hline
Location indicator 1 & 1 if not at home or work, 0 o.w. & Location relevant to availability to walk & Y & Y  \\
\hline
Location indicator 2 & 1 if at work, 0 o.w. &  & Y & Y  \\
\hline
Step count variability & Historical standard deviation of step counts in 60 minute window surrounding decision point, taken over prior 7 days & Responsiveness in different times of day & Y & Y\\
\hline
Steps in prior 30 minutes  & Step count in 30 minutes prior to decision point & Measure of recent activity & & Y\\
\hline
Square root of steps yesterday  & Square root of the total step count yesterday & Recent commitment/ engagement &  & Y\\
\hline
Outdoor Temperature  & Degrees Celsius & Cold weather potentially less appealing &  & Y\\
\end{tabular}
}
\caption{List of features available to the bandit in the HeartSteps experiment. The features available to model the action interaction (effect of sending an anti-sedentary message) and to model the baseline (reward under no action) are denoted via a ``Y'' in the corresponding column. }
\label{Fig:HS1Feats}
\end{figure}

\begin{figure}[h]
\centering
\includegraphics[width=4in]{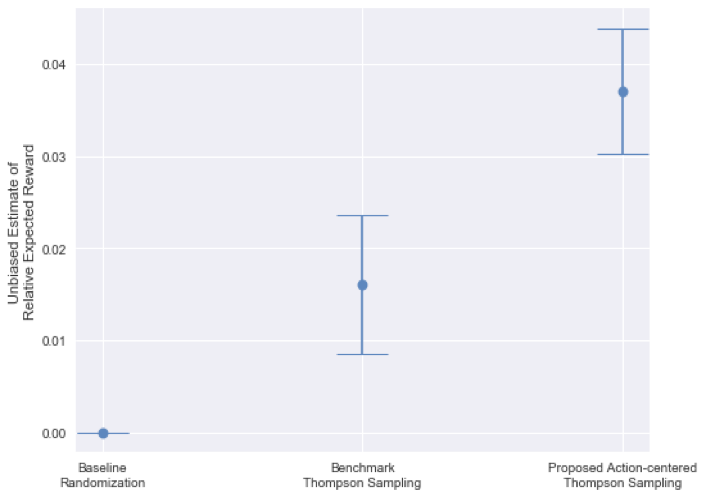}%\includegraphics[width=2in]{FiguresTalk/
%FigNoneClassE-eps-converted-to.pdf}
\caption{Unbiased estimates of the average reward received by the benchmark Thompson sampling contextual bandit and the proposed action-centered Thompson sampling contextual bandit, relative to the reward received under the pre-specified HeartSteps randomization policy. Also shown are one standard deviation error bars for the computed estimates. The superior performance of the action-centering approach is indicative of its robustness to the high complexity of the baseline subject behavior.  }
%Right: Full $K=2$ ($\Omega = \Psi_1 \oplus \Psi_2$), 225-node Kronecker sum of $\Psi_1$ with an ER graph of size 9 ($\Psi_2$).  }
\label{Fig:HS1}
\end{figure}
\subsection{Simulation model}
\label{app:simMod}
Figure \ref{Fig:HS1coef} shows the coefficients $\theta$ used in the main text simulations. The coefficients shown in the figure associated with the first action are obtained via a linear regression analysis of the binary action (sending or not sending a message) HeartSteps intervention data, and the coefficients for the second action are a simple modification of those.

\begin{figure}[h]
\centering
\scriptsize{
\begin{tabular}{p{3cm}|p{3cm}|p{3cm}|c|c}
\textbf{Feature}             & Action 1 coef. & Action 2 coef.   \\
\hline
Number of messages sent & .116 & .116\\
\hline
Location indicator 1 & -.275 & .275  \\
\hline
Location indicator 2 & -.233 & -.233  \\
\hline
Step count variability & .0425 & .0425\\
\hline
\end{tabular}
}
\caption{Effect coefficients, based on HeartSteps data, used for simulation reward model. }
\label{Fig:HS1coef}
\end{figure}
For the time varying simulation, Gaussian processes were used to generate the reward coefficient sequence $\eta_t$ and the state sequence $\bar{s}_t$. We used Gaussian processes since if $\eta_t$ is IID, then the baseline reward becomes an IID random variable, making the baseline reward not time varying. 

We used the Gaussian process
\[
\eta_t = \sqrt{1-\rho^2}\eta_{t-1} + \rho n_t
\]
where $\eta_0 = 1_7$, $n_t \sim \mathcal{N}(0, I_7)$, and $\rho = 0.1$. The state sequence $\bar{s}_t$ was generated in the same manner.

\section{Definitions}
In order to proceed with the proof of Theorem 1, we make the following definitions. 

\begin{definition}

Define a filtration $\mathcal{F}_{t-1} = \{\mathcal{H}_{t-1}, \bar{s}_{t}\}$ as the union of the history and current context.
\end{definition}
%\subsection{Events}
%DEFINE EVENTS
\begin{definition}
Let
\[
z_{t,a} = \sqrt{s_{t,a}^T B(t)^{-1} s_{t,a}},
\]
for all $a = 1,\dots, N$. 
\end{definition}
\begin{definition}
Define $\ell(T) = R\sqrt{d \log (T^3) \log (1/\delta)} + 1$, $v = R\sqrt{\frac{24}{\epsilon} d \log(1/\delta)}$, and $g(T) = \sqrt{4d \log(Td)} v + \ell(T)$. 
\end{definition}
We divide the arms $\bar{a} > 0$ into saturated and unsaturated actions. 
\begin{definition}[Saturated vs. unsaturated actions]
Any arm $\bar{a}>0$ for which $g(T) z_{t,\bar{a}} < \ell(T) z_{t,\bar{a}^*_t}$ is called a \emph{saturated arm}. If an arm is not saturated, it is called \emph{unsaturated}. Let $C(t) \subseteq \{1,\dots,N\}$ be the subset of saturated arms at time $t$. 
\end{definition}
Observe that the optimal arm $\bar{a}^*$ is unsaturated by definition.

We can now state the required concentration events and present bounds on the probability they occur.
\subsection{Concentration events}

\begin{definition}
Let $E^\mu(t)$ be the event that for all $\bar{a} = 1 ,\dots, N$
\[
|s_{t,\bar{a}}^T\hat{\theta}_t - s_{t,\bar{a}}^T \theta| \leq \ell(T) z_{t,\bar{a}}.
\]
Similarly, let $E^\theta(t)$ be the event that for all $\bar{a} = 1,\dots, N$
\[
|s_{t,\bar{a}}^T\theta'_t - s_{t,\bar{a}}^T\hat{\theta}_t| \leq \sqrt{4d\log(Td)} v z_{t,\bar{a}}
\]
and $E^\theta_0(t)$ be the corresponding event that for all $\bar{a} = 1,\dots, N$
\[
|s_{t,\bar{a}}^T\tilde{\theta}_t - s_{t,\bar{a}}^T\hat{\theta}_t| \leq \sqrt{4d\log(Td)} v z_{t,\bar{a}}
\]
\end{definition}

We can bound the probabilities of the events $E^\theta(t)$, $E^\theta(t)_0$, and $E^\mu(t)$ in the following lemmas. Observe that by definition $\mathbb{P}(E^\theta(t)|\mathcal{F}_{t-1}) = \mathbb{P}(E^\theta(t)_0|\mathcal{F}_{t-1})$.

%MAKE SURE HOLD FOR ALL T
\begin{lemma}[\cite{agrawal2013thompson}]
\label{lemm:eoth}
For all $t$, and possible filtrations $\mathcal{F}_{t-1}$, $\mathbb{P}(E^\theta(t)|\mathcal{F}_{t-1}) \geq 1 - \frac{1}{T^2}$. 
\end{lemma}
%Note that this also implies that $\mathbb{P}(E^\theta_0(t)|\mathcal{F}_{t-1}) \geq 1 - \frac{1}{T^2}$. 
For $E^\mu(t)$ we have
\begin{lemma}\label{lemm:AG1}
For all $t$, $0 < \delta < 1$, $\mathbb{P}(E^\mu(t)) \geq 1 - \frac{\delta}{T^2}$. 
\end{lemma}
The proof is given in Section \ref{supp:lemmAG1}. 
\subsection{Supermartingales}
\label{supp:AzHo}
\begin{definition}[Supermartingale]
A sequence of random variables $(Y_t;t \geq 0)$ is called a supermartingale corresponding to a filtration $\mathcal{F}_{t}$ if, for all $t$, $Y_t$ is $\mathcal{F}_t$-measurable, and 
\[
\mathbb{E}[Y_t - Y_{t-1} | \mathcal{F}_{t-1}] \leq 0
\]
for all $t \geq 1$.
\end{definition}
\begin{lemma}[Azuma-Hoeffding inequality]
If for all $t = 1 ,\dots, T$ a supermartingale $(Y_t;t \geq 0)$ corresponding to filtration $\mathcal{F}_t$ satisfies $|Y_t - Y_{t-1}| \leq c_t$ for some constants $c_t$, then for any $a \geq 0$
\[
\mathbb{P}(Y_T - Y_0 \geq 0) \leq e^{- \frac{a^2}{2 \sum_{t=1}^T c_t^2}}.
\]
\end{lemma}

\section{Preliminary results}
\subsection{Lemma \ref{lemm:AG3}: Probability of choosing a saturated action $\bar{a}_t \in C(t)$}
%\subsection{Probability of choosing a saturated action}
\begin{lemma}[\cite{agrawal2013thompson} Lemma 2]\label{lemm:AG2}
For any filtration $\mathcal{F}_{t-1}$ such that $E^\mu(t)$ is true, 
\[
\mathbb{P}(s_{t,\bar{a}^*_t}^T\theta' > s_{t,\bar{a}^*_t}^T\theta + \ell(T) z_{t,\bar{a}^*_t} | \mathcal{F}_{t-1}) \geq \frac{1}{4e\sqrt{\pi T^\epsilon}}.
\]
\end{lemma}
We can now prove the following.
\begin{lemma}\label{lemm:AG3}
For any filtration $\mathcal{F}_{t-1}$ such that $E^\mu(t)$ is true, 
\[
\mathbb{P}(\bar{a}_t \in C(t) | \mathcal{F}_{t-1})\leq \frac{1}{p}\mathbb{P}(\bar{a}_t \notin C(t) | \mathcal{F}_{t-1}) + \frac{1}{p T^2},
\]
where $p = \frac{1}{4 e \sqrt{\pi T^\epsilon}}$.
\end{lemma}
\begin{proof}
Recall that $\bar{a}_t$ is the action with the largest value of $s_{t,i}^T \theta'$. Hence, if $s_{t,\bar{a}^*_t}^T \theta'$ is larger than $s_{t,i}^T \theta'$ for all $i \in C(t)$, then $\bar{a}_t$ is one of the unsaturated actions.
Hence
\begin{equation}\label{eq:AG1}
\mathbb{P}(\bar{a}_t \notin C(t) | \mathcal{F}_{t-1}) \geq \mathbb{P}(s_{t,\bar{a}^*_t}^T \theta' > s_{t,i}^T \theta', \forall i \in C(t) | \mathcal{F}_{t-1}.
\end{equation}
We know that by definition all saturated arms $i \in C(t)$ have $g(T) z_{t,j} < \ell(T) z_{t,\bar{a}^*_t}$. %Furthermore, if events $E^\mu(t), E^\theta(t)$ are true, $s_{t,i}^T \theta' \leq s_{t,i}^T\theta + g(T) z_{t,i}$. 
Given an $\mathcal{F}_{t-1}$ such that $E^\mu(t)$ holds, we have that either $E^\theta(t)$ is false or for all $i \in C(t)$
\[
s_{t,i}^T \theta' \leq s_{t,i}^T\theta + g(T) z_{t,i} \leq s_{t,\bar{a}^*_t}^T \theta + \ell(T) z_{t,\bar{a}^*_t}
\]
implying
\begin{align*}
\mathbb{P}(s_{t,\bar{a}^*_t}^T &\theta' > s_{t,i}^T \theta', \forall j \in C(t) |\mathcal{F}_{t-1}) \\
&\geq \mathbb{P}(s_{t,\bar{a}^*_t}^T \theta' > s_{t,\bar{a}^*_t}^T \theta +\ell(T) z_{t,\bar{a}^*_t}|\mathcal{F}_{t-1})- \mathbb{P}(\overline{E^\theta(t)}|\mathcal{F}_{t-1})\\
&\geq p - \frac{1}{T^2}.
\end{align*}
where we have used the definitions of $E^\mu(t)$, $E^\theta(t)$, and the last inequality follows from Lemma \ref{lemm:AG2} and Lemma \ref{lemm:AG1}. Substituting into \eqref{eq:AG1} gives
\[
\mathbb{P}(\bar{a}_t \notin C(t) | \mathcal{F}_{t-1}) + \frac{1}{T^2} \geq p,
\]
and
\[
\frac{\mathbb{P}(\bar{a}_t \in C(t) | \mathcal{F}_{t-1})}{\mathbb{P}(\bar{a}_t \notin C(t) | \mathcal{F}_{t-1})+\frac{1}{T^2}}\leq \frac{1}{p}.
\]

\end{proof}

%\subsection{
\subsection{Lemma \ref{lemm:sumz} - Bound on $\sum_t z_{t,\bar{a}_t}$}
\begin{lemma}
For $z_{t,a} = \sqrt{s_{t,a}^T B(t)^{-1} s_{t,a}}$, we have that 
\[
\sum_{t=1}^T z_{t,\bar{a}_t}  \leq \frac{5}{C_\pi} \sqrt{d T \log T},
\]
where $C_\pi = \sqrt{\min(\pi_{\min}(1-\pi_{\max}), \pi_{\max}(1-\pi_{\min})}$ is a contant.
\end{lemma}

\begin{proof}
We apply the following lemma from \cite{auer2002nonstochastic} and \cite{chu2011contextual}. 
\begin{lemma}
\label{lemm:sumz}
Let $A_t = I + \sum_{t=1}^T x_t x_t^T$, where $x_t \in \mathbb{R}^d$ is a sequence of vectors. Then, defining $\sigma_t = \sqrt{x_t^T A_t^{-1} x_t}$, we have
\[
\sum_{t =1 }^T \sigma_t \leq 5 \sqrt{d T \log T}.
\]
\end{lemma}
To apply this to $\sum_t z_{t,\bar{a}_t}$, let $x_t = \sqrt{\pi_t(1-\pi_t)}s_{t,\bar{a}_t}$. Then $A_t = I + \sum_{t=1}^T (\pi_t(1-\pi_t))s_{t,\bar{a}_t}s_{t,\bar{a}_t}^T = B_t$, and we have
\[
\sigma_t =  \sqrt{x_t^T A_t^{-1} x_t} =\sqrt{\pi_t (1-\pi_t)}\sqrt{s_{t,\bar{a}_t}^T B_t s_{t,\bar{a}_t}} = \sqrt{\pi_t (1-\pi_t)}z_{t,\bar{a}_t}.
\]
Applying Lemma \ref{lemm:sumz} we thus have %that $\sum_{t =1 }^T \sigma_t \leq 5 \sqrt{d T \log T}$ and 
\[
\sum_{t=1}^T z_{t,\bar{a}_t} \leq \max_{t}\left(\frac{1}{\sqrt{\pi_t (1-\pi_t)}}\right) \sum_{t=1}^T\sigma_t \leq \frac{5}{C_\pi} \sqrt{d T \log T},
\]
where $C_\pi = \sqrt{\min(\pi_{\min}(1-\pi_{\max}), \pi_{\max}(1-\pi_{\min})}$ is a constant.
\end{proof}

\section{Proof of Lemma \ref{lemm:I} - term I}\label{supp:lem1}
%\section{Proof of Lemma ?? - term II}
\begin{proof}
We know that by definition of the optimal policy, $(\pi^*_t - \pi_t)s_{t,\bar{a}_t}^T \theta \geq 0$. Hence under event $E^\mu(t)$, 
\begin{align*}
(\pi^*_t - \pi_t)s_{t,\bar{a}_t}^T \theta 
&\leq\mathbb{P}\left(\mathrm{sign}(s_{t,\bar{a}_t}^T \theta') \neq \mathrm{sign}(s_{t,\bar{a}_t}^T \theta)\right)|s_{t,\bar{a}_t}^T \theta|\\
&\leq \min\left[|s_{t,\bar{a}_t}^T \theta|, \mathbb{P}(\mathrm{sign}(s_{t,\bar{a}_t}^T \theta') \neq \mathrm{sign}(s_{t,\bar{a}_t}^T \theta))\right]\\
&\leq (\ell(T) + \sqrt{4d\log(Td)}v) z_{t,\bar{a}_t} + 1-\mathbb{P}(E^\theta_0(t)).
\end{align*}
Substituting in the definitions of $\ell(T),v$ and the bound in Lemma \ref{lemm:eoth} on $\mathbb{P}(E^\theta_0(t))$, we have
\begin{align*}
(\pi^*_t - \pi_t)s_{t,\bar{a}_t}^T \theta  &\leq \left(R\sqrt{d \log(T^3) \log(1/\delta)} + 1 + \sqrt{4d\log(Td)}R \sqrt{\frac{24}{\epsilon} d \log(1/\delta)}\right) z_{t,\bar{a}_t} + \frac{1}{T^2}\\
&\leq C\sqrt{\frac{d^2}{\epsilon} \log(1/\delta)} z_{t,\bar{a}_t} + \frac{1}{T^2}.
\end{align*}
Summing over $t$ and recalling that by Lemma \ref{lemm:sumz} $\sum_{t=1}^T z_{t,\bar{a}_t} \leq \frac{5}{C_\pi} \sqrt{dT\log T}$, we have that under event $E^\mu(t)$
\begin{align*}
I &= \sum_{t=1}^T (\pi^*_t - \pi_t)s_{t,\bar{a}_t}^T \theta\\
&\leq  \frac{C}{C_\pi}  \sqrt{d^3 T \log(Td) \log(1/\delta)}.
\end{align*}
Since the probability that $E^\mu(t)$ holds is at least $1-\frac{\delta}{T^2}$ by Lemma \ref{lemm:AG1}, the lemma results.
\end{proof}

\section{Proof of Lemma \ref{lemm:II}: Bound on term $II$}\label{supp:lem2}

Before commencing the proof, we first state the following result from \cite{abbasi2011improved}.
\begin{lemma}[\cite{abbasi2011improved}]
\label{lemm:AbbYad}
Let $(\mathcal{F}'_t; t\geq 0)$ be a filtration, $(m_t;t\geq 1)$ be an $\mathbb{R}^d$-valued stochastic process such that $m_t$ is $(\mathcal{F}'_{t-1})$- measurable, $(\eta_t ; t\geq 1)$ be a real-valued martingale difference process such that $\eta_t$ is $(\mathcal{F}'_t)$-measurable. For $t \geq 0$, define $\xi_t = \sum_{\tau = 1}^t m_\tau \eta_\tau$ and $M_t = I_d + \sum_{\tau = 1}^t m_{\tau} m_{\tau}^T$, where $I_d$ is the $d$-dimensional identity matrix. Assume $\eta_t$ is conditionally $R$-sub-Gaussian. 

Then, for any $\delta' > 0$, $t \geq 0$, with probability at least $1 - \delta'$, 
\[
\|\xi_t \|_{M_t^{-1}} \leq R\sqrt{d \log \left(\frac{t+1}{\delta'}\right)},
\]
where $\|\xi_t \|_{M_t^{-1}} = \sqrt{\xi_t^T M_t^{-1} \xi_t}$.
\end{lemma}

We now prove Lemma \ref{lemm:II}.
\begin{proof}
%We have shown that $Y_t$ is a supermartingale. 

Defining $\mathrm{regret}'(t) = (s_{t,\bar{a}^*_t}^T \theta - s_{t,\bar{a}_t}^T \theta)I(E^\mu(t))$, 
we have the following lemma, which we prove in Section \ref{supp:lemmsupMart}.
\begin{lemma}\label{lemm:supMart}
Let, for $p = \frac{1}{4e \sqrt{\pi T^\epsilon}}$,
\begin{align}
X_t &= \mathrm{regret}'(t) - \frac{g(T)}{p} I(a(t) \notin C(t)) z_{t,\bar{a}^*_t}\\
Y_t &= \sum_{w = 1}^t X_w.
\end{align}
Then $(Y_t; t = 0,\dots, T)$ is a super-martingale process with respect to filtration $\mathcal{F}_t$. %Furthermore,
%\[
%\sum_{t=1}^T \mathrm{regret}'(t) \leq \frac{g(T)^2}{\ell(T)} \frac{3}{p} \sum_{t=1}^T z_{t,\bar{a}_t} + \frac{2 g(T)}{pT} + \frac{8}{p}\frac{g(T)^2}{\ell(T)} \sqrt{2 T \log\left(\frac{2}{\delta}\right)}.
%\]
\end{lemma}
Given our results in Section \ref{supp:Mart} and our concentration bounds, the proof is closely related to \cite{agrawal2013thompson} and is listed in Section \ref{supp:lemmsupMart}.

Using the definition of $X_t$, we have that $|Y_t - Y_{t-1}| \leq |X_t| \leq 1 + \frac{g(T)}{p} + \frac{2g(T)^2}{\ell(T)} + \frac{2 g(T)}{p T^2} \leq \frac{8}{p} \frac{g(T)^2}{\ell(T)}$. This allows us to apply the Azuma-Hoeffding inequality listed in Section \ref{supp:AzHo}, giving that
\begin{align*}
\sum_{t=1}^T \mathrm{regret}'(t) &\leq \sum_{t=1}^T \left(\frac{g(T)}{p} I(\bar{a}_t \notin C(t)) z_{t,\bar{a}^*_t}\right) + \frac{2 g(T)}{pT} + \frac{2 g(T)^2}{\ell(T)} \sum_{t=1}^T z_{t,\bar{a}_t} + \frac{8}{p}\frac{g(T)^2}{\ell(T)}\sqrt{2T \log \frac{2}{\delta}}\\
&\leq \sum_{t=1}^T \left(\frac{g(T)^2}{\ell(T)} \frac{1}{p} I(\bar{a}_t \notin C(t)) z_{t,\bar{a}_t}\right) + \frac{2 g(T)}{pT} + \frac{2 g(T)^2}{\ell(T)} \sum_{t=1}^T z_{t,\bar{a}_t} + \frac{8}{p} \frac{g(T)^2}{\ell(T)}\sqrt{2T \log \frac{2}{\delta}}\\
&\leq \frac{g(T)^2}{\ell(T)} \frac{3}{p} \sum_{t=1}^T z_{t,\bar{a}_t} + \frac{2 g(T)}{pT} + \frac{8}{p} \frac{g(T)^2}{\ell(T)} \sqrt{2T\log \frac{2}{\delta}}.
\end{align*}
with probability at least $1 - \delta/2$, where we recall that if $\bar{a}_t \notin C(t)$, then $g(T) z_{t,\bar{a}_t} \geq \ell(T) z_{t,\bar{a}^*_t}$. 

Substituting in the bound $\sum_{t=1}^T z_{t,\bar{a}_t} \leq \frac{5}{C_\pi} \sqrt{dT \log T}$ from Lemma \ref{lemm:sumz} and the definitions of $g(T), p, \ell(T)$, we obtain that 
\[
\sum_{t=1}^T \mathrm{regret}'(t) \leq \frac{C'}{C_\pi} \left(\frac{d^2}{\epsilon} \sqrt{T^{1+\epsilon}} \log \frac{1}{\delta} \log (Td)\right)
\]
with probability at least $1-\frac{\delta}{2}$, where $C'$ is a constant. Recall that by Lemma \ref{lemm:AG1}, $E^\mu(t)$ holds for all $t$ with probability at least $1- \delta/2$, and that $\mathrm{regret}'(t) = (s_{t,\bar{a}^*_t}^T \theta - s_{t,\bar{a}_t}^T \theta)$ whenever $E^\mu(t)$ holds. By the union bound we then have that
\[
II = \sum_{t=1}^T (s_{t,\bar{a}^*_t}^T \theta - s_{t,\bar{a}_t}^T \theta) \leq \frac{C'}{C_\pi} \left(\frac{d^2}{\epsilon} \sqrt{T^{1+\epsilon}} \log \frac{1}{\delta} \log (Td)\right)
\]
with probability at least $1-\delta$. The lemma results.

%USE OTHER DEF OF REGRET
\end{proof}

\section{Proof of Lemma \ref{lemm:supMart}}
\label{supp:lemmsupMart}

\begin{proof}
To prove that $Y_t$ is a super-martingale by the definition above, we need to prove that for all $1\leq t\leq T$ and any $\mathcal{F}_{t-1}$, $\mathbb{E}[Y_t - Y_{t-1} | \mathcal{F}_{t-1}] \leq 0$. 

We first consider filtrations $\mathcal{F}_{t-1}$ for which $E^\mu(t)$ holds. By the definition of $\bar{a}_t$, $ s_{t,\bar{a}_t}^T \theta' \geq s_{t,{a}_t^{**}}^T \theta'$. Under $E^\theta(t)$ and $E^\mu(t)$ we then must have that for all $i = 1,\dots, N$
\begin{align*}
s_{t,i}^T \theta &\geq s_{t,i}^T \theta' - g(T) z_{t,i}\\
&\geq s_{t,{a}_t^{**}}^T \theta' - g(T)z_{t,i}\\
&\geq s_{t,{a}_t^{**}}^T \theta - g(T)z_{t,\bar{a}^*_t} - g(T) z_{t,i}.
\end{align*}
Hence $s_{t,\bar{a}^*_t}^T \theta - s_{t,\bar{a}_t}^T \theta \leq g(T)(z_{t,\bar{a}_t} + z_{t,\bar{a}^*_t})$.

For $\mathcal{F}_{t-1}$ such that $E^\mu(t)$ holds, we then can write
\begin{align*}
\mathbb{E}[\mathrm{regret}'(t) | \mathcal{F}_{t-1}] &= \mathbb{E}[ (s_{t,\bar{a}^*_t}^T \theta - s_{t,\bar{a}_t}^T \theta) | \mathcal{F}_{t-1}]\\
&\geq \mathbb{E}[g(T)(z_{t,\bar{a}_t} + z_{t,\bar{a}^*_t})|\mathcal{F}_{t-1}] + \mathbb{P}(\overline{E^\theta(t)})\\
&= g(T)z_{t,\bar{a}^*_t} \mathbb{P}(\bar{a}_t \in C(t)|\mathcal{F}_{t-1}) + g(T) \mathbb{E}\left[(\frac{g(T)}{\ell(T)} z_{t,\bar{a}_t}I(\bar{a}_t \notin C(t))|\mathcal{F}_{t-1}\right] \\&+ g(T) \mathbb{E}[z_{t,\bar{a}_t} | \mathcal{F}_{t-1}] + \frac{1}{T^2}.
\end{align*}
where we have used the facts that $\mathrm{regret}'(t) \leq 1$, the definition of unsaturated arms, and Lemma \ref{lemm:AG1}. Applying Lemma \ref{lemm:AG3} and noting that since $\min \mathrm{eig}(B(t)) \leq 1$, $z_{t,i} \leq \|s_{t,i}\|_2 \leq 1$, we can show that
\begin{equation}\label{eq:2}
\mathbb{E}[\mathrm{regret}'(t) | \mathcal{F}_{t-1}] \leq \frac{g(T)}{p} \mathbb{P}(\bar{a}_t \notin C(t) | \mathcal{F}_{t-1}) z_{t,\bar{a}^*_t} + \frac{2 g(T)}{pT^2}.
\end{equation}
By definition, $\mathrm{regret}'(t) = (s_{t,\bar{a}^*_t}^T \theta - s_{t,\bar{a}_t}^T \theta)I(E^\mu(t))$ is zero and the above inequality holds whenever $E^\mu(t)$ is not true. Since we have considered both cases, the lemma is proved.

\end{proof}

\section{Proof of Lemma \ref{lemm:AG1}}
\label{supp:lemmAG1}
%We first 
\begin{proof}
We can apply Lemma \ref{lemm:AbbYad} with $m_t = \sqrt{\pi_t(1-\pi_t)}s_{t,\bar{a}_t}$, 
\[
\eta_t = \frac{\hat{r}_t(\bar{a}_t)}{\sqrt{\pi_t(1-\pi_t)}} - \sqrt{\pi_t(1-\pi_t)} s_{t,\bar{a}_t}^T \theta,
\]
and with the filtration $\mathcal{F}'_t = (\bar{s}_{\tau+1}, m_{\tau + 1}, \eta_\tau: \tau \leq t)$ effectively containing all the available information up to the current time. $\mathcal{F}'_{t-1}$ is measurable by definition, and in Section \ref{supp:Mart} we show 
\begin{lemma}\label{lemm:Mart}
Suppose that $n_t$ is $R$ sub-Gaussian. Then $\eta_t$ is a $\mathcal{F}'_{t}$-measurable, $R'$-sub-Gaussian, martingale difference process where $R' = \frac{R+2}{\sqrt{\pi_{\min}(1-\pi_{\max})}} +\sqrt{\pi_{\max}(1-\pi_{\min})}$.
\end{lemma}

%as proven in ?? $\eta_t$ is $\mathcal{F}'_{t}$-measurable, $R$-sub-Gaussian, and a martingale difference process. 

We then have
\begin{align*}
M_t &= I_d + \sum_{\tau = 1}^t m_{\tau} m_{\tau}^T = I_d + \sum_{\tau = 1}^t \pi_\tau (1-\pi_\tau)s_{\tau, \bar{a}_\tau}s_{\tau, \bar{a}_\tau}^T,\\
\xi_t &= \sum_{\tau = 1}^t m_\tau \eta_\tau = \sum_{\tau = 1}^t s_{\tau,\bar{a}_\tau} \left(\hat{r}_t(\bar{a}_t) - {\pi_t(1-\pi_t)} s_{t,\bar{a}_t}^T \theta\right).
\end{align*}
Observe that these are the two primary components of the contextual bandit, specifically, $B_t = M_{t-1}$ and $b_t - \mathbb{E}[b_t] = \xi_t$. Hence, $\hat{\theta}_t  - \theta = M_{t-1}^{-1} (\xi_{t-1} - \theta)$. Letting $\|y\|_A = \sqrt{y^T A y}$ for any vector $y$ and matrix $A \in \mathbb{R}^{d\times d}$, for all $\bar{a}>0$ we have that since $M_t$ is positive definite,
\begin{align*}
|s_{\bar{a},t}^T &\hat{\theta} - s_{\bar{a},t}^T {\theta}| = |s_{\bar{a},t}^T M^{-1}_{t-1} (\xi_{t-1} - \theta)|\\
 &\leq \|s_{\bar{a},t} \|_{M_{t-1}^{-1}} \|\xi_{t-1} - \theta\|_{M_{t-1}^{-1}}\\
 &= \|s_{\bar{a},t} \|_{B_{t}^{-1}} \|\xi_{t-1} - \theta\|_{B_{t}^{-1}}.
\end{align*}
Applying Lemma \ref{lemm:AbbYad}, we have that for any $\delta' > 0$, $t \geq 1$, 
\[
\|\xi_{t-1}\|_{M_{t-1}^{-1}}  \leq R' \sqrt{d \log \frac{t}{\delta'}}.
\]
Then $\|\xi_{t-1} - \theta\|_{M_{t-1}^{-1}}\leq R' \sqrt{d \log \frac{t}{\delta'}} + \|\theta\|_{M_{t-1}^{-1}} \leq R'\sqrt{d \log \frac{T}{\delta'}} + 1$. Setting $\delta'  = \delta/T^2$ implies that with probability $1 - \delta/T^2$, for all $\bar{a}$, 
\begin{align*}
|s_{\bar{a},t}^T &\hat{\theta} - s_{\bar{a},t}^T {\theta}| \leq \|s_{\bar{a},t} \|_{B_{t}^{-1}} \left(R'\sqrt{d \log (T^3) \log \frac{1}{\delta}} + 1\right)= \ell(T) z_{t,\bar{a}}.
\end{align*}
\end{proof}

\subsection{Proof of Lemma \ref{lemm:Mart}: Martingale analysis of $\eta_t$ }\label{supp:Mart}
\begin{proof}
Recall
\begin{align*}
|\eta_t| &= \left|\frac{\hat{r}_t(\bar{a}_t)}{\sqrt{\pi_t(1-\pi_t)}} - \sqrt{\pi_t(1-\pi_t)} s_{t,\bar{a}_t}^T \theta\right|\\
&= \left|\frac{(I(a_t>0) - \pi_t) r_t(a_t)}{\sqrt{\pi_t(1-\pi_t)}} - \sqrt{\pi_t(1-\pi_t)} s_{t,\bar{a}_t}^T \theta\right|\\
&= \left|\frac{(I(a_t>0) - \pi_t) ( s_{t,a_t}^T \theta I(a_t > 0) + n_t + \bar{f}_t(\bar{s}_t))}{\sqrt{\pi_t(1-\pi_t)}} - \sqrt{\pi_t(1-\pi_t)} s_{t,\bar{a}_t}^T \theta\right|\\
&\leq \sqrt{\pi_t(1-\pi_t)} + \left| \frac{2+n_t}{\sqrt{\pi_t(1-\pi_t)}}\right|.
\end{align*}
since the rewards are all bounded by one and the $\pi_{\min} \leq \pi_t \leq \pi_{\max}$ are bounded. 
We have assumed that $n_t$ is R sub-Gaussian. Since a bounded random variable $|X|< b$ is $b$ sub-Gaussian and the sum of independent $b_1$ and $b_2$ sub-Gaussian random variables is $b_1 + b_2$ sub-Gaussian, we have that $\eta_t$ is $R' = \frac{R+2}{\sqrt{\pi_{\max}(1-\pi_{\min})}} +\sqrt{\pi_{\min}(1-\pi_{\max})}$ conditionally sub-Gaussian.
Since $\pi_{\min},\pi_{\max}$ are bounded away from 0 and 1 by constants, $R'$ is a constant.

%we have that $\eta_t$ has sub-Gaussian norm $R' \leq  $ since 
%Thus $\eta_t$ is $R'$ sub-Gaussian. 

%so $\eta_t$ is ?-sub-Gaussian conditioned on $\mathcal{H}_{t-1},\bar{a}_t,\bar{s}_t$.
Additionally, for all $\bar{a}_t$
\begin{align*}
\mathbb{E}[\eta_t |\mathcal{H}_{t-1},\bar{a}_t,\bar{s}_t]&= \frac{\mathbb{E}[\hat{r}_t(\bar{a}_t)|\mathcal{H}_{t-1},\bar{a}_t,\bar{s}_t]}{\sqrt{\pi_t(1-\pi_t)}} - \sqrt{\pi_t(1-\pi_t)} s_{t,\bar{a}_t}^T \theta\\
&= \frac{\mathbb{E}[(I(a_t>0) - \pi_t) r_t(a_t)|\mathcal{H}_{t-1},\bar{a}_t,\bar{s}_t]}{\sqrt{\pi_t(1-\pi_t)}} - \sqrt{\pi_t(1-\pi_t)} s_{t,\bar{a}_t}^T \theta\\
&= \frac{\pi_t (1-\pi_t)s_{t,\bar{a}_t}^T \theta}{\sqrt{\pi_t(1-\pi_t)}} - \sqrt{\pi_t(1-\pi_t)} s_{t,\bar{a}_t}^T \theta\\
&=0,
\end{align*}
where the third equality follows from \eqref{eq:ExpR}. Thus $\mathbb{E}[\eta_t |\mathcal{H}_{t-1},\bar{s}_t] = 0$ and $\eta_t$ is a martingale difference process.
%Note 

%(I(a_t>0) - \pi_t) r_t(a_t)
\end{proof}
%\begin{comment}
%\section{RANDOM INCOHERENT MUSINGS: Exploration into variance reduction}
%Looking into binary first, i.e. $N=1$.
%
%One option (this version needs to be verified):
%\[
%B = B + [(I(a_t > 0) - \pi_t)s_{t,\bar{a}_t};\bar{s}_t][(I(a_t > 0) - \pi_t)s_{t,\bar{a}_t};\bar{s}_t]^T
%\]
%\[
%b = [\hat{b};\bar{b}]
%\]
%\[
%\bar{b} = \bar{b} + \bar{s}_t r_t(a_t)
%\]
%\[
%\hat{\theta} = [B^{-1}]_{1:d,:} b = [B^{-1}]_{1:d,1:d}\hat{b} +  [B^{-1}]_{d+1:,1:d}\bar{b}.
%\]
%In this version, if it were correct the $B$ would approach block-diagonal and will revert to the main-text estimator. Therefore, there are no gains in constants or rates whatsoever. 
%This is why we haven't observed any empirical gains with this method. If we want to incorporate a baseline model of this type, we should get rid of the action-centering altogether and try to prove a bound on classical Thompson sampling.
%
%Another idea: Subtracting the baseline estimate from $r_t$ in $\hat{b}$. Issue: Can only learn baseline from zero actions. In the large-sample regime, will improve the $R'$ constant from 
%\[
%R' = \frac{R+2}{\sqrt{\pi_{\min}(1-\pi_{\max})}} +\sqrt{\pi_{\max}(1-\pi_{\min})}
%\]
%to
%\[
%R' = \frac{R+1 + ERR}{\sqrt{\pi_{\min}(1-\pi_{\max})}} +\sqrt{\pi_{\max}(1-\pi_{\min})}
%\]
%where $ERR$ is the largest difference between the best baseline approximator and the true baseline.
%
%This is an ok improvement. We cannot improve the $ERR$ from being the max, since we are in the adversarial regime. Empirical results indicate a small improvement in regret (albeit in a necessarily non-adversarial setting).
%
%\end{comment}

\end{appendices}

%\end{comment}
\end{document}